\newif\iflipics
\newtheorem{theorem}{Theorem}
\newtheorem{corollary}[theorem]{Corollary}
\newtheorem{lemma}[theorem]{Lemma}
\theoremstyle{definition}
\newtheorem{definition}{Definition}
\theoremstyle{remark}
\newcounter{note}[section]
\newcommand{\LOCAL}{$\mathcal{LOCAL}$\xspace}
\newcommand{\BASIC}[1][k]{\textsc{Basic $#1$-Spanner}\xspace}
\newcommand{\DIRECTED}[1][k]{\textsc{Directed $#1$-Spanner}\xspace}
\newcommand{\DEGREE}[1][k]{\textsc{Lowest-Degree $#1$-Spanner}\xspace}
\newcommand{\DSNDISTANCE}{\textsc{Directed Steiner Network with Distance Constraints}\xspace}
\newcommand{\SLSN}{\textsc{Shallow-Light Steiner Network}\xspace}
\title{Distributed Distance-Bounded Network Design Through Distributed Convex Programming\footnote{Supported in part by NSF awards 1464239 and 1535887.}}
\titlerunning{Distributed Network Design Through Convex Programming}
\author[1]{Michael Dinitz}
\author[2]{Yasamin Nazari}
\affil[1]{Johns Hopkins University, Baltimore, MD, USA\\
  \texttt{mdinitz@cs.jhu.edu}}
\affil[2]{Johns Hopkins University, Baltimore, MD, USA\\
  \texttt{ynazari@jhu.edu}}
\authorrunning{M.~Dinitz and Y.~Nazari} 
\subjclass{F.2.2 Nonnumerical Algorithms and Problems}
\keywords{distributed algorithms, approximation algorithms, convex programming}
\title{Distributed Distance-Bounded Network Design Through Distributed Convex Programming\footnote{Supported in part by NSF awards 1464239 and 1535887.}}
\author{Michael Dinitz \\
Johns Hopkins University\\
mdinitz@cs.jhu.edu
\and
Yasamin Nazari\\
Johns Hopkins University\\
ynazari@jhu.edu
}
\begin{document}
\maketitle

\begin{abstract}
Solving linear programs is often a challenging task in distributed settings. While there are good algorithms for solving packing and covering linear programs in a distributed manner (Kuhn et al.~2006), this is essentially the only class of linear programs for which such an algorithm is known. In this work we provide a distributed algorithm for solving a different class of convex programs which we call ``distance-bounded network design convex programs''.  These can be thought of as relaxations of network design problems in which the connectivity requirement includes a distance constraint (most notably, graph spanners).  Our algorithm runs in $O( (D/\epsilon) \log n)$ rounds in the $\mathcal{LOCAL}$ model and finds a $(1+\epsilon)$-approximation to the optimal LP solution for any $0 < \epsilon \leq 1$, where $D$ is the largest distance constraint.  

While solving linear programs in a distributed setting is interesting in its own right, this class of convex programs is particularly important because solving them is often a crucial step when designing approximation algorithms.  Hence we almost immediately obtain new and improved distributed approximation algorithms for a variety of network design problems, including Basic $3$- and $4$-Spanner, Directed $k$-Spanner, Lowest Degree $k$-Spanner, and Shallow-Light Steiner Network Design with a spanning demand graph.  Our algorithms do not require any ``heavy'' computation and essentially match the best-known centralized approximation algorithms, while previous approaches which do not use heavy computation give approximations which are worse than the best-known centralized bounds.  
\end{abstract}

\section{Introduction}

Distributed network design is a classical type of distributed algorithmic problem, going back at least to the seminal work on distributed MST by Gallager, Humblet, and Spira~\cite{GHS83}.  By ``network design'', we mean the class of problems which can be phrased as ``given input graph $G$, find a subgraph $H$ which has some property $P$, and minimize the cost of $H$''.  Clearly different properties $P$, and different notions of cost, lead to very different problems.  One important class of problems are \emph{distance-bounded} network design problems, where the property $P$ is that certain pairs of vertices are within some distance of each other in $H$ (where distance refers to the shortest-path distance).  The most well-known type of distance-bounded network design problems are problems involving \emph{graph spanners}, in which the distance requirement is that the distance in $H$ for all (or certain) pairs is within a certain factor (known as the stretch) of their original distance in $H$.  But there are many other important versions of distance-bounded network design, such as the bounded diameter problem~\cite{DK99} and the shallow-light Steiner tree/network problems~\cite{KS16}.  

Many of these problems are NP-hard, so they cannot be solved optimally in polynomial time even in the centralized setting.  Thus they have been studied extensively from an approximation algorithms point of view, where we design algorithms which approximate the optimal solution but which run in polynomial time.  For many of these problems, a key step in the best-known centralized approximation algorithm is solving a linear programming relaxation of the problem, and then rounding the optimal fractional solution into a feasible integral solution.  Interestingly, it is relatively common for the rounding to be ``local'': if we are in a distributed setting and happen to know the optimal fractional LP solution, then the algorithm used to round this to an integral solution can be accomplished with a tiny amount of extra time (either $0$ or a small constant number of rounds).  So the bottleneck when trying to make these algorithms distributed is solving the LP, not rounding it.

Solving LPs in distributed settings has received only a small amount of attention, since it unfortunately turns out to be extremely challenging in general.  Most notably, Kuhn, Moscibroda, and Wattenhofer~\cite{KMW06} gave an efficient distributed algorithm (in the \LOCAL model of distributed computation) for packing/covering LPs.  Unfortunately, the LPs used for distance-bounded network design are not packing/covering LPs\footnote{They can be turned into packing/covering LPs through a projection operation, but unfortunately this technique results in an exponential number of constraints, making~\cite{KMW06} inapplicable.  However, this technique has been used in the centralized setting for the fault-tolerant directed $k$-spanner problem~\cite{DK-STOC}}, and hence we are not able to use their techniques.  In this paper we show how to solve these LPs (and convex generalizations of them) in the \LOCAL model of distributed computation, which almost immediately gives the best-known results for a variety of distance-bounded network design problems.  

In particular, for many network design problems (\DIRECTED, \BASIC[3], \BASIC[4], \DEGREE, \DSNDISTANCE with spanning demands, and \SLSN with spanning demands) we give approximation algorithms which run in $O(D \log n)$ rounds (where $D$ is the maximum distance bound) and have the same approximation ratios as in the centralized setting.  Previous distributed algorithms for these problems with similar round complexity have either used ``heavy'' computations (non-polynomial time algorithms) at the nodes (in which case they can often do \emph{better} than the best computationally-bounded centralized algorithm), or give approximation bounds which are asymptotically worse than the best centralized bounds.  See Section~\ref{sec:related} for more discussion of previous work.

\subsection{Our Results}

We give two main types of results.  First, we give a distributed algorithm that (approximately) solves distance-bounded network design convex programs with small round complexity.  We then use this result to (almost immediately) get improved distributed approximation algorithms for a variety of network design problems.  

\subsubsection{Solving convex programs}
Stating our main technical result (distributed approximations of distance-bounded network design convex programs) in full generality requires significant technical setup, so we provide an informal description here.  See Section~\ref{sec:solving-convex} for the full definitions and theorem statements (Theorem~\ref{thm:distributed_LP} in particular).  But informally, a distance-bounded network design convex program is the following.  We are given a graph $G = (V, E)$, a set $\mathcal S \subseteq V \times V$, and for each $(u,v) \in \mathcal S$ there is a set of ``allowed'' $u-v$ paths $\mathcal P_{u,v}$.  Informally, the integral problem is to find a subgraph $H$ of $G$ so that every $(u,v) \in \mathcal S$ is connected by at least one path from $\mathcal P_{u,v}$ in $H$, and the goal is to minimize some notion of ``cost''.  If our notion of ``cost'' is captured by an objective function $g : \mathbb{R}_{\geq 0}^{|E|} \rightarrow \mathbb{R}$ (which is typically linear, but which can be more general convex functions as long as they satisfy a ``partitionability'' constraint -- see Section~\ref{sec:solving-convex} for the details), then the natural relaxation of this problem is the following convex program, which has a variable $x_e$ for every edge and a variable $f_P$ for every allowed path.

\begin{align*} 
\min \quad &g({x})\\
\text{s.t.} \quad &\sum_{P \in \mathcal{P}_{u,v}: e\in P} f_P\leq x_e  &\forall(u,v) \in  \mathcal{S}, \forall e \in E \\
&\sum_{P \in \mathcal{P}_{u,v}} f_P \geq 1 &\forall(u,v) \in  \mathcal{S}\\
&x_e \geq 0 &\forall	e \in E\\
&f_P \geq 0 &\forall(u,v)\in  \mathcal{S}, \forall P\in \mathcal{P}_{u,v}
\end{align*}

Informally, the first type of constraint says that an allowed path is included only if all edges in it are included, and the second type of constraint required us to include at least one allowed path for each $(u,v) \in \mathcal S$.  We call this type of convex program a \emph{distance-bounded network design convex program}.  It is clearly not a packing/covering LP due to the first type of constraint, and hence there is no known distributed algorithm to solve this kind of program.  However, note that if the maximum length of any allowed path is constant, then there are only a polynomial number of such paths, and hence the size of the convex program is polynomial and so it can be solved in polynomial time in the centralized setting under reasonable assumptions on $g$ (see~\cite{GLS88} for details on solving convex programs in polynomial time). 

Our main technical result is that we can approximately solve these optimization problems even in a distributed setting. For any path $P$ let $\ell(P)$ denote the length of the path (the number of edges in it).  
\begin{theorem} \label{thm:main-informal}
For any constant $\epsilon > 0$, any distance-bounded network design convex program can be solved up to a $(1+\epsilon)$-approximation in $O(D \log n)$ rounds in the \LOCAL model, where ${D = \max_{(u,v) \in \mathcal S} \max_{P \in P_{u,v}} \ell(P)}$.  Moreover, if the convex program can be solved in polynomial time in the centralized sequential setting, then the distributed algorithm uses only polynomial-time computations at every node
\end{theorem}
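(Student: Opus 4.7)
The plan is to cast the program as a fractional multicommodity-flow-like problem with a convex objective and to adapt a Lagrangian/multiplicative-weights-update (MWU) scheme, in the spirit of Plotkin-Shmoys-Tardos and Garg-Könemann, to the \LOCAL model. The essential structural observation is that every $P \in \mathcal{P}_{u,v}$ has length at most $D$, so the subproblem associated with a single pair $(u,v)$ lives entirely in the $D$-ball around $u$; consequently the inner step of MWU -- computing, for each pair, a minimum-weight allowed path under the current edge weights -- can be simulated in $O(D)$ rounds of \LOCAL, since after $D$ rounds each node has collected its entire $D$-neighborhood (together with all current edge weights) and can solve its subproblem with no further communication.

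Concretely, I would first treat the linear case $g(x) = \sum_e c_e x_e$. Taking the dual yields per-pair edge weights $y_{u,v,e}$ and path ``lengths'' $z_{u,v}$, subject to $\sum_{(u,v)} y_{u,v,e} \leq c_e$ and $z_{u,v} \leq \sum_{e \in P} y_{u,v,e}$ for every allowed $P$. The scheme then maintains a single nonnegative weight $w_e$ per edge (initialized to $\delta/c_e$ for a small $\delta$). At each iteration, in parallel for every pair $(u,v) \in \mathcal{S}$, the algorithm finds a minimum-weight allowed $u$-$v$ path under $w$, pushes a small unit of flow along it, and updates $w_e \leftarrow w_e \cdot (1 + \epsilon \cdot \text{flow through } e)$. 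Standard analyses of this style yield a $(1+\epsilon)$-approximation after $\tilde O(1/\epsilon)$ iterations. Since each iteration takes $O(D)$ rounds, this matches the claimed $O((D/\epsilon) \log n)$ bound.

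Per-iteration implementation in \LOCAL is clean: in $D$ rounds every node learns the current weights in its $D$-ball; each endpoint of a pair then computes its minimum-weight allowed path locally, and in $D$ more rounds it flags the $\leq D$ edges of that path so each such edge can aggregate the flows it received and update $w_e$. The primal solution $(x,f)$ is recovered by scaling the total flow ever routed so that the covering constraints hold, with a scaling factor that can be maintained online at each edge. Because \LOCAL permits unbounded messages there is no bandwidth issue, and since the only computation at each node is a local shortest-path (plus a local convex subproblem in the general case), polynomial-time centralized solvability translates into polynomial-time local computation, as required by the theorem.

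The main obstacle is coordinating simultaneous per-pair updates. Sequential Garg-Könemann processes commodities one at a time, but here many pairs may touch the same edge in a single round, so the multiplicative increment could be too aggressive unless tamed. I would handle this by scaling down the per-iteration flow increment by an upper bound on the number of pairs whose allowed-path set can include any given edge -- a locally computable quantity in a $D$-neighborhood -- at the cost of only a constant factor in the iteration count. The second obstacle is extending from linear to convex $g$; this is where I expect the ``partitionability'' hypothesis promised for Section~\ref{sec:solving-convex} to do its work, by ensuring that the driving cost for the weight update can be replaced by a locally computable surrogate for $\nabla g$ and that the Garg-Könemann progress lemma still goes through. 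Verifying this progress lemma in the convex, simultaneously-updated setting is where I anticipate most of the real technical content to live.
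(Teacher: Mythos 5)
Your MWU/Garg--K\"onemann route has several gaps that I do not think can be patched within the claimed round budget, and it is in any case not the paper's approach. First, the program is not a concurrent multicommodity flow and not a packing/covering LP: the capacity constraints are \emph{per pair} ($\sum_{P \in \mathcal{P}_{u,v}: e \in P} f_P \leq x_e$ separately for each $(u,v)$), so at optimality $x_e$ is the \emph{maximum} over pairs of the flow on $e$, not the sum. Your single edge weight $w_e$ updated by ``total flow through $e$'' is the right primal-dual pairing for a sum-type capacity, not for this max-type coupling, so the standard GK progress lemma does not apply as stated. Second, the iteration count claim is unsupported: GK-style analyses need $\tilde{O}(m/\epsilon^2)$ augmentations or a width bound, and your fix for simultaneous updates (scaling the increment down by the number of pairs whose allowed paths can use an edge) multiplies the iteration count by that width, which can be polynomial in $n$, not ``only a constant factor.'' Third, the stopping rule and the final rescaling in GK/PST schemes are driven by global aggregates (e.g., $\sum_e c_e w_e$ crossing a threshold, or a binary search on the optimum value); in $\mathcal{LOCAL}$ with only $O(D\log n)$ rounds these cannot be computed when the graph diameter greatly exceeds $D\log n$. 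Finally, the hope that ``partitionability'' supplies a surrogate gradient for convex $g$ misreads its role: in the paper it is a condition for recombining solutions computed on disjoint vertex clusters, not a smoothness or separability property usable inside an MWU step.

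The paper's proof is entirely different and sidesteps all of this. It samples $t = O(\log n/\epsilon^2)$ independent $(D,\lambda)$-padded decompositions (constructed distributively in $O((D/\epsilon)\log n)$ rounds, clusters of diameter $O((D/\epsilon)\log n)$); in each decomposition every cluster center gathers its induced subgraph, solves the \emph{local} convex program restricted to demands $(u,v)$ whose ball $B(u,D)$ lies inside the cluster, and broadcasts the solution; each edge then sets $\tilde{x}_e$ to (roughly) the average of its local values over the iterations in which its endpoints were co-clustered, inflated by $(1+\epsilon)$. Feasibility holds w.h.p.\ because the padding property guarantees each demand's $D$-ball is captured in at least a $1/(1+\epsilon)$ fraction of the iterations (Chernoff plus union bound), and the cost bound follows because each local optimum costs at most the restriction of the global optimum to its cluster, after which convexity and partitionability let the per-iteration cluster costs be recombined and averaged to at most $(1+\epsilon)$ times the optimum. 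If you want to salvage an MWU-based argument you would need a width-independent, locally terminating scheme tailored to the max-type coupling and to general partitionable convex objectives; none of that is standard, and it is exactly the technical content your proposal defers.
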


The dependence on $\epsilon$ in the above theorem is hidden in the $O(\cdot)$ notation -- see Theorem~\ref{thm:distributed_LP} for the full statement.

Our main technique is to use a distributed construction of \emph{padded decompositions}, a specific type of network decomposition which we explain in detail in Section~\ref{decomposition-sec}.  Padded decompositions have been very useful for metric embeddings and approximation algorithms (e.g., \cite{GHR06,KLMN04}), but to the best of our knowledge have not been used before in distributed algorithms (with the exception of~\cite{DK2011}, which used a special case of them to give a distributed algorithm for the fault-tolerant $2$-spanner problem).  Very similar decompositions, such as the famous Linial-Saks decomposition~\cite{LS93}, have been used extensively in distributed settings, but the guarantees for padded decompositions are somewhat different (and we believe that these decompositions may prove useful in the future when designing distributed approximation algorithms).  In Section~\ref{decomposition-sec} we give a distributed algorithm in the \LOCAL model to construct padded decompositions.  These padded decompositions allow us to solve a collection of ``local'' convex programs with the guarantees that a) most of the demands in $\mathcal S$ are satisfied in one of the local programs, and b) the solutions of the local convex programs combine into a (possibly infeasible) global solution with cost at most the cost of the global optimum.  Then by averaging over $O(\log n)$ of these decompositions we get a feasible global solution which is almost optimal.

\subsubsection{Distributed approximation algorithms for network design}

Solving convex programming problems in distributed environments is interesting in its own right, and Theorem~\ref{thm:main-informal} is our main technical contribution, but the particular class of convex programs that we can solve are mostly interesting as convex relaxations of interesting combinatorial optimization problems.  Many of the problems are NP-hard, but there has been significant work (some quite recent) on designing approximation algorithms for them (see, e.g., \cite{DK-STOC,CD16,DZ16,Berman2011}).  Almost all of these approximations depend on convex relaxations which fall into our class of ``distance-bounded network design convex programs''.  This means that as long as the rounding scheme can be computed locally, we can design distributed versions of these approximation algorithms by using Theorem~\ref{thm:main-informal} to solve the appropriate convex relaxation and then using the local rounding scheme.

We are able to use this framework to give distributed approximation algorithms for several problems. Most of them are variations of \emph{graph spanners}, which were introduced by Peleg and Ullman \cite{PelegU87} and Peleg and Sch{\"{a}}ffer \cite{PelegS89}, and are defined as follows.

\begin{definition}
Let $G = (V, E)$ be a graph (possibly directed), and let $k \in \mathbb{N}$.  A subgraph $H$ of $G$ is a \emph{$k$-spanner} of $G$ if $d_H(u,v) \leq k \cdot d_G(u,v)$ for all $u,v \in V$.  The value $k$ is called the \emph{stretch} of the spanner.
\end{definition}

Before stating our results, we first define the problems.  In the \BASIC problem we are given an undirected graph $G$ and a value $k \in \mathbb{N}$.  A subgraph $H$ of $G$ is a feasible solution if it is a $k$-spanner of $G$, and the objective is to minimize the number of edges in $H$.  For $k = 3,4$, the best-known approximation algorithm for this problem is $\tilde O(n^{1/3})$~\cite{Berman2011,DZ16}.  If the input graph $G$ (and the solution $H$) are directed, then this is the \DIRECTED problem, for which the best-known approximation is $\tilde O(\sqrt{n})$~\cite{Berman2011}.  If the objective is instead to minimize the maximum degree in $H$ then this is the \DEGREE problem, for which the best-known approximation is $\tilde O(n^{\left(1-1/k\right)^2})$~\cite{CD16}.

The following theorem contains our results on distributed approximations of graph spanners.  Informally, it states that for we can give the same approximations in the the \LOCAL model as are possible in the centralized model.  

\begin{theorem} \label{thm:spanner-main}
There are algorithms in the \LOCAL model with the following guarantees.  For \DIRECTED, the algorithm runs in $O(k \log n)$ rounds and gives an $\tilde O(\sqrt{n})$-approximation.  For \BASIC[3] and \BASIC[4], the algorithms run in $O(\log n)$ rounds and gives an $\tilde O(n^{1/3})$-approximation.  For \DEGREE, the algorithm runs in $O(k \log n)$ rounds and gives an $\tilde O(n^{\left(1-1/k\right)^2})$-approximation.  All of these algorithms use only polynomial-time computations at each node.
\end{theorem}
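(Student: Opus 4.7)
The plan is to invoke Theorem~\ref{thm:main-informal} as a black box for each of the four problems, and then apply the existing centralized rounding schemes, all of which turn out to be local. For each problem I would follow the same four-step template: (i) write down the standard convex relaxation used by the best-known centralized approximation, (ii) verify that it is an instance of a distance-bounded network design convex program by identifying the demand set $\mathcal{S}$, the allowed-path families $\mathcal{P}_{u,v}$, and checking the objective, (iii) use Theorem~\ref{thm:main-informal} to obtain a $(1+\epsilon)$-approximate fractional solution in $O(D \log n)$ rounds, and (iv) execute the centralized rounding locally in a constant number of additional rounds.

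For \BASIC[3] and \BASIC[4] and for \DIRECTED, I would use the flow-based relaxations from~\cite{DZ16} and~\cite{Berman2011} respectively. In all three cases the demand set is $\mathcal{S} = E$ (an edge $(u,v)$ need not be bought but must be spanned), $\mathcal{P}_{u,v}$ is the set of (directed, for the directed variant) $u$--$v$ paths of length at most $k$, and the objective is $\sum_e x_e$. This fits the template of Section~\ref{sec:solving-convex} with $D = 3$, $4$, or $k$, so Theorem~\ref{thm:main-informal} produces the fractional solution in $O(\log n)$ or $O(k \log n)$ rounds. The rounding of Berman et al.\ proceeds by randomly sampling an ``antispanner'' and settling edges on the basis of the LP flow values in a local neighborhood; since each node already knows its $O(k)$-neighborhood in the $\mathcal{LOCAL}$ model, the whole rounding is local, preserving the $\tilde{O}(\sqrt{n})$ guarantee for directed $k$-spanner and the $\tilde{O}(n^{1/3})$ guarantee of~\cite{DZ16} for basic $3$- and $4$-spanner.

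For \DEGREE, the objective is $\max_v \deg_H(v)$ rather than $\sum_e x_e$, which is not linear. Here I would instead use the convex relaxation of Chlamtac--Dinitz~\cite{CD16}, which replaces the max by an $\ell_p$-norm with $p = \Theta(\log n)$. The path constraints are unchanged (with $\mathcal{S} = E$, $\mathcal{P}_{u,v}$ the set of $u$--$v$ paths of length at most $k$, and $D = k$), and the objective $\sum_v (\sum_{e \ni v} x_e)^p$ decomposes as a sum over vertices, with each term depending only on edges incident to the given vertex; this is precisely the partitionability condition that Theorem~\ref{thm:main-informal} requires of $g$. Applying the theorem yields a $(1+\epsilon)$-approximate fractional solution in $O(k \log n)$ rounds, after which the independent randomized rounding of~\cite{CD16} is entirely local and produces an $\tilde{O}(n^{(1-1/k)^2})$-approximation.

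The main obstacle, beyond confirming that each relaxation really does fit the distance-bounded network design template, is verifying two ``locality'' claims: first, that each centralized rounding procedure can be executed using only the information in an $O(k)$-neighborhood (and hence in $O(k)$ additional rounds in $\mathcal{LOCAL}$), and second, for \DEGREE, that the $\ell_p$ relaxation genuinely satisfies the partitionability condition formalized in Section~\ref{sec:solving-convex}. Polynomial-time local computation is inherited for free from the hypothesis of Theorem~\ref{thm:main-informal} together with the fact that each of the centralized rounding schemes in~\cite{DZ16,Berman2011,CD16} runs in polynomial time.
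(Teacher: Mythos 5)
Your overall strategy is exactly the paper's: cast each relaxation as a distance-bounded network design CP, solve it with Theorem~\ref{thm:main-informal}, and then execute the centralized rounding locally. Two points need attention, however.

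First, your description of the \cite{Berman2011} rounding for \DIRECTED is not what the algorithm does: antispanners are objects in the \emph{analysis}, not something the algorithm samples. The actual rounding (a) includes each edge independently with probability $\min(1,\sqrt{n}\ln n\cdot x_e)$ and (b) covers the ``thick'' edges by sampling roughly $\tilde O(\sqrt{n})$ random roots and adding shortest-path in- and out-arborescences. Step (b) is the one ingredient that is \emph{not} automatically local -- a full shortest-path arborescence is a global object -- and this is precisely where the paper does its (small but necessary) work: the arborescences are truncated at depth $k$ and the roots are chosen independently (Algorithm~\ref{alg:kspanner}), and one must recheck that every thick edge is still covered w.h.p.\ and that the output size remains $O(\sqrt{n}\log n\,(n+LP^*))$, using that any $k$-spanner has $\Omega(n)$ edges (Lemmas~\ref{lem:directed-round} and~\ref{lem:distributed_rounding}). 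Your proposal asserts locality of the rounding without this modification, so as written the thick-edge step would fail to be implementable in $O(k)$ rounds. (Minor: the $\tilde O(n^{1/3})$ bound for stretch $3$ is due to \cite{Berman2011}; \cite{DZ16} is the stretch-$4$ result.)

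Second, for \DEGREE you take a genuinely different route from the paper: the paper keeps the max-degree objective itself and proves directly that $g(x)=\max_{v}\sum_{u:(v,u)\in E}x_{(v,u)}$ is convex partitionable (Lemma~\ref{lem:max-deg-partition}), then rounds with edge probabilities $x_e^{1/k}$ as in \cite{CD16}. Your substitute objective $\sum_v\bigl(\sum_{e\ni v}x_e\bigr)^p$ with $p=\Theta(\log n)$ can be made to work -- it is non-decreasing, convex, vanishes at $0$, and is partitionable with $h_\sigma$ the sum -- but you must add the step converting the $(1+\epsilon)$-approximation of this objective back into an $O(1)$-approximation of the maximum fractional degree (via $\|d\|_\infty\le\|d\|_p\le n^{1/p}\|d\|_\infty$) before invoking the \cite{CD16} rounding analysis; also note that the relaxation in \cite{CD16} minimizes a max-degree variable $\lambda$ rather than an $\ell_p$ norm, so the attribution is off. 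The paper's direct route is simpler and avoids the extra constant-factor loss, but both yield the claimed $\tilde O(n^{(1-1/k)^2})$ bound in $O(k\log n)$ rounds.
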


We emphasize that our algorithms for these spanner problems both match the best-known centralized approximations and only use polynomial-time computations at each node.  There is significant previous work (see Section~\ref{sec:related}) on designing distributed approximation algorithms for these and related problems that has only one of these two properties, but all previous approaches which use only polynomial-time computations necessarily do worse than the best centralized bound (or have much worse round complexity).  At a high level, this is because previous approaches (most notably~\cite{BEC2016}) do not actually use the structure of the centralized algorithm: they only use the efficient centralized approximation as a black box.  By going inside the black box and noticing that they all use a similar type of convex relaxation, we can simultaneously get low round complexity, best-known approximation ratios, and efficient local computation.  

It turns out that we can use our techniques for an even broader question: \DSNDISTANCE with a spanning demand graph.  In this problem there is a set $\mathcal S \subseteq V \times V$ of demands, and for every demand $(u,v) \in \mathcal S$ there is a length bound $L(u,v)$.  The goal is to find a subgraph $H$ so that $d_H(u,v) \leq L(u,v)$ for all $(u,v) \in \mathcal S$, and the objective is to minimize the number of edges in $H$.  The state of the art centralized bound for this problem is a $O(n^{3/5 + \epsilon})$-approximation~\cite{CDKL17}, but if we further assume that every vertex $u \in V$ is the endpoint of at least one demand in $\mathcal S$ (which we will refer to as a \emph{spanning demand graph}) then it is straightforward to see that the centralized algorithm of~\cite{Berman2011} for \DIRECTED can be generalized to give a $\tilde O(\sqrt{n})$-approximation. Our distributed version of this algorithm also generalizes, giving the following result.

\begin{theorem} \label{thm:DSN-main}
There is an approximation algorithm in the \LOCAL model for \DSNDISTANCE with a spanning demand graph with approximation ratio $\tilde O(\sqrt{n})$ which runs in $O((\max_{(u,v) \in \mathcal S} L(u,v)) \log n)$ rounds and uses only polynomial-time computations.
\end{theorem}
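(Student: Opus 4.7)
The plan is to plug Theorem~\ref{thm:main-informal} into a distributed implementation of the centralized $\tilde O(\sqrt{n})$-approximation scheme of Berman et al.~\cite{Berman2011} for \DIRECTED, suitably generalized to arbitrary distance bounds $L(u,v)$. First, I would write down the natural path-flow relaxation of \DSNDISTANCE: variables $x_e$ for each $e \in E$, variables $f_P$ for each $(u,v) \in \mathcal{S}$ and each simple $u$-$v$ path $P$ in $G$ with $\ell(P) \leq L(u,v)$, and objective $g(x) = \sum_e x_e$. This fits the distance-bounded network design template, so Theorem~\ref{thm:main-informal} yields a $(1+\epsilon)$-approximate fractional solution $\{x_e^*\}$ in $O(D \log n)$ rounds with $D = \max_{(u,v) \in \mathcal{S}} L(u,v)$, using only polynomial-time local computation.

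Next I would round $x^*$ with a two-phase scheme in the spirit of~\cite{Berman2011}. Set $\tau = \Theta(\sqrt{n \log n})$. In the first phase, add each edge $e$ to $H$ independently with probability $\min(1, \tau x_e^*)$. In the second phase, sample $R \subseteq V$ by taking each vertex independently with probability $\Theta(\log n / \tau)$ and, for each $r \in R$, add to $H$ all edges of a fixed shortest-path in- and out-arborescence of depth $D$ rooted at $r$. Both phases are implementable in $O(D)$ rounds of the \LOCAL model: edge sampling is purely local, and the arborescences can be built by flooding from each $r \in R$. Feasibility then follows from the standard Berman et al.~argument: for each demand, either its fractional flow contains a ``thick'' path whose every edge is included by the first phase, or its flow is spread out enough that with high probability some internal vertex on a fractional path is sampled into $R$, in which case its arborescences give a $u$-$v$ path of length at most $L(u,v)$ in $H$. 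A union bound over $|\mathcal{S}| \leq n^2$ demands yields overall feasibility w.h.p.

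The cost analysis is essentially that of~\cite{Berman2011}. Phase one contributes $O(\tau) \cdot \mathrm{OPT}_{\text{LP}} = \tilde O(\sqrt{n}) \cdot \mathrm{OPT}$ edges in expectation, and phase two contributes at most $2(n-1)|R| = O(n \sqrt{n \log n})$ edges in expectation. The key subtlety, and where the spanning-demand hypothesis enters, is charging the phase-two cost back to $\mathrm{OPT}$: since every vertex is the endpoint of at least one demand, any feasible $H$ must contain at least one edge incident to every vertex, so $\mathrm{OPT} \geq n/2$, and the phase-two cost is therefore also $\tilde O(\sqrt{n}) \cdot \mathrm{OPT}$. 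The main obstacle I anticipate is making this charging clean: without the spanning assumption the arborescences could add edges that no demand actually uses, and the $|R| \cdot n$ bound would not translate into a multiplicative-$\sqrt{n}$ guarantee, which is exactly why Theorem~\ref{thm:DSN-main} is stated only for spanning demand graphs.
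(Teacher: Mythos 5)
Your proposal follows essentially the same route as the paper: formulate the path-flow relaxation with $D=\max_{(u,v)\in\mathcal S}L(u,v)$, solve it via Theorem~\ref{thm:main-informal} (i.e., Theorem~\ref{thm:distributed_LP}) in $O(D\log n)$ rounds, round with the Berman et al.~\cite{Berman2011} scheme using shortest-path in/out-arborescences truncated at depth $D$, and invoke the spanning-demand hypothesis exactly where the paper does, namely to get $\mathrm{OPT}\ge n/2$ so that the arborescence phase can be charged as $\tilde O(\sqrt n)\cdot\mathrm{OPT}$; your closing observation about why the spanning assumption is indispensable is precisely the paper's point. The one slip is your parameter calibration: with edge inflation $\tau=\Theta(\sqrt{n\log n})$ and root-sampling probability $\Theta(\log n/\tau)$, the ``standard Berman et al.~argument'' does not literally go through. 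The thin-demand (antispanner) union bound requires inflation on the order of $\sqrt n\cdot\log n$ when the thick/thin threshold is $|N_{u,v}|\approx\sqrt n$, since there can be roughly $n^{\sqrt n}$ minimal antispanners and a per-antispanner survival probability of $e^{-\tau}$ with $\tau=\sqrt{n\log n}$ is far too weak; symmetrically, a thick demand whose local graph has about $\sqrt n$ vertices is hit by a sampled root only with probability about $1-e^{-\sqrt{\log n}}$ at your sampling rate, which is not high probability. (Also, the thick/thin dichotomy is on the size of the vertex set $N_{u,v}$ of allowed paths, not on how the fractional flow is spread.) Reverting to the parameters of Algorithm~\ref{alg:kspanner} --- inflation $n^{1/2}\ln n$, root probability $3\ln n/n^{1/2}$, with $D$ in place of $k$ --- repairs both halves without affecting the $\tilde O(\sqrt n)$ ratio, and you should note, as the paper does, that $D$ must be global knowledge so nodes truncate consistently.
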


Note that \DIRECTED and \BASIC are special cases of this problem, where there is a demand for every edge and the length bound is just $k$ times the original distance.  Interestingly, other network design problems which have proved important for distributed systems are also special cases, including the \textsc{Distance Preserver} problem (when $L(u,v) = d_G(u,v)$ for all $(u,v) \in \mathcal S$), the \textsc{Pairwise $k$-Spanner} problem (where $L(u,v) = k \cdot d_G(u,v)$ for all $(u,v) \in \mathcal S$), and the \SLSN problem (where $L(u,v) = D$ for all $(u,v) \in \mathcal S$, for some global parameter $D$).  \SLSN in particular is a key component in state of the art systems for reliable Internet transport~\cite{BWDA17}, although in that particular application the demand graph is not spanning.  Extending our techniques to handle totally general demands by giving a distributed version of~\cite{CDKL17} is an extremely interesting open question.

\subsection{Related Work} \label{sec:related}

While distributed solving of convex programs is a natural question, there is little previous work in the \LOCAL model.  Possibly most related to our results is a line of work on solving \emph{positive} linear programs (packing and covering LPs).  This was introduced by~\cite{PY93}, improved by~\cite{BBR97}, and then essentially optimal upper and lower bounds were given by~\cite{KMW06}.  Unfortunately, the convex programs we consider are not positive linear programs due to the ``capacity'' constraints in which some variables appear with positive coefficients while others have negative coefficients.  

A special case of our result was proved earlier in~\cite{DK2011}, who showed how to solve the LP relaxation of \BASIC[2] in the \LOCAL model in $O(\log^2 n)$ rounds (they actually show more than this, by giving a distributed algorithm for the \emph{fault-tolerant} version of \BASIC[2], but that is not germane to our results).  Our techniques are heavily based on~\cite{DK2011}, which is itself based on the ideas from~\cite{KMW06}.  In particular, \cite{KMW06} uses a Linial-Saks decomposition~\cite{LS93} to solve ``local'' versions of the linear program in different parts of the graph, and then combines these appropriately.  To make this work for the \BASIC[2] LP relaxation, \cite{DK2011} had to use \emph{padded decompositions}, which can be thought of as a variant of Linial-Saks with slightly different guarantees which, for technical reasons, are more useful for network design LPs.  In this paper we extend these techniques further by giving a more general definition of padded decomposition which works for larger distance requirements, showing how to construct them in the \LOCAL model, and then showing that the basic ``combining'' idea from~\cite{DK2011} can be extended to handle these more general decompositions and far more general constraints and objective functions.  

The major type of combinatorial optimization problem which our techniques allow us to approximate are various versions of graph spanners.   There are an enormous number of papers on spanners in both centralized and distributed models, but fewer papers which attempt to find the ``best'' spanner for the particular given input graphs (most papers on spanners give existential results and algorithms to achieve them, rather than optimization results).  These optimization questions (e.g., \BASIC, \DIRECTED, and \DEGREE) have been considered quite a bit in the context of centralized approximation algorithms and hardness of approximation~\cite{DK-STOC,Berman2011,DZ16,CD16,DKR16}, but almost all of the known centralized results use linear programming relaxations, making them difficult to adapt to distributed settings.  Hence there have been only two results on optimization bounds in distributed models: \cite{DK2011} and \cite{BEC2016}.

Barenboim et al.~\cite{BEC2016} provided a distributed algorithm using Linial-Saks decompositions that for any integer parameters $k, \alpha$, gives an $O(n^{1/\alpha})$-approximation for \DIRECTED in $\exp(O(\alpha)) +O(k)$ time. This is an extremely strong approximation bound, and in fact is  better than even the best centralized bound.  This is possible due to their use of very heavy (exponential time) local computation.  Our algorithms, on the other hand, take polynomial time for local computations.  Barenboim et al.~\cite{BEC2016} show that heavy local computations can be removed from their algorithm by using a centralized approximation algorithm for a variant of spanners known as \emph{client-server $k$-spanners}, and in particular that an $f(k)$-approximation for client-server $k$-spanner can be turned into an $O(n^{1/\alpha} f(k))$-approximation algorithm running in $\exp(O(\alpha)) + O(k)$ rounds in the $\mathcal{LOCAL}$ model for minimum $k$-spanner with only polynomial local computation.  So in order to achieve the same asymptotic approximation ratio as the best-known centralized algorithm, the parameter $\alpha$ must be $\Omega(\log n)$ and hence the running time is polynomial in $n$, even though $k$ might be a constant.  It is essentially known (though not written anywhere) that a variety of other results with slightly different tradeoffs can be achieved through similar uses of Linial-Saks~\cite{Elkin17} or refinements of Linial-Saks such as~\cite{EN16}.  However, since all of these approaches treat the centralized approximation algorithm as a black box, none of them can achieve the same approximation ratio as the centralized algorithm without suffering a much worse (usually polynomial) round complexity that the $O(k \log n)$ that we achieve.

\section{Preliminaries and Notation}
The distributed setting we will be considering is the \LOCAL model~\cite{Peleg00}, in which time passes in synchronous rounds and in each round every node can send an arbitrary message of unbounded size to each of its neighbors in the underlying graph $G = (V, E)$ (as always, we will let $n = |V|$ and $m = |E|$). We will assume that all nodes know $n$ (or at least know a constant approximation of $n$).  Usually in this model the communication graph is the same as the graph of computational interest; e.g., we will be trying to compute a spanner of the communication graph itself.  But for some applications we will want the graph to be directed, in which case we make the standard assumption that communiocation is bidirectional: the graph for which we are trying to compute a convex relaxation / network design problem is directed, but messages can be sent in both directions across a link.  In other words, the communication graph is just the undirected version of the given directed graph.  

For any pair of nodes $u,v \in V$ we define $d(u,v)$ to be the distance between $u$ and $v$ in the communication graph (i.e.~the length of a shortest path between $u$ and $v$ regardless of edge directions). We define $B(u,k)$ to be an undirected ball of radius $k$ from $u$ in the communication graph. More precisely, $B(u,k) =\{w \in V \mid d(u,w) \leq k\}$. 
	

If $x$ is a vector then we use $x_i$ to denote the $i$'th component of $x$.  Most of the time our vectors will be indexed by edges in a graph, in which case we will also use the notation $(x_e)_{e \in E}$.

Given a partition of the vertices $V$ of a graph, we will refer to each part of the partition as a ``cluster''. For any graph $G=(V,E)$ and set $S \subseteq V$, we let $E(S)$ denote the set of edges in the subgraph induced by $S$, i.e., $E(S)= \{ (u,v) \in E \mid u , v \in S\}$. We will frequently need ``restrictions'' of vectors to induced subgraphs, so for any vector $x \in \mathbb{R}^m$, we define $x^{S}=(x_e^{S})_{e \in E}$ to be the vector in $\mathbb{R}^m$ where $x^{S}_e =0$ if $e \not\in E(S)$ and  $x^{S}_e=x_e$ if $e \in E(S)$.

\section{Padded decompositions}\label{decomposition-sec}
We will now define and give an algorithm to construct \emph{padded decompositions}, which are one of the key technical tools that we will use when designing algorithm to solve distance-bounded network design convex programs.  In this section all graphs are undirected and all distances are with respect to this undirected graphs (in fact, the definition and our algorithm work more generally for any metric space). Recall that $B(u,k)$ denotes the undirected ball of radius $k$ from node $u$ (in the communication graph). 

\begin{definition}\label{def:padded_decomposition}
Given an \textit{undirected} graph $G$, a $(k, \epsilon)$-padded decomposition, where $0 <\epsilon \leq 1$, is a probability measure $\mu$ over the set of graph partitions (clusterings) that has the following properties:
\begin{itemize}
\item[1)] For every $P \in \text{supp}(\mu)$, and every cluster $C \in P$, we have: $diam(C) \leq O( (k/\epsilon) \log n)$.
\item[2)] For every $u \in V$, it holds that $\Pr( \exists C \in P \mid B(u,k) \subseteq C) \geq 1-\epsilon$. That is to say, the probability that all nodes in $B(u,k)$ are in the same cluster is at least $1-\epsilon$.
\end{itemize} 
\end{definition}

This notion of padded decompositions is standard in metric embeddings and approximation algorithms~\cite{GHR06,KLMN04}, but to the best of our knowledge has not yet been used in distributed algorithms.   We first use a centralized algorithm (Algorithm \ref{alg:padded_decomposition}) to sample from a $(k, \epsilon)$-padded decomposition, and then describe how it can be implemented in the $\mathcal{LOCAL}$ model. Algorithm \ref{alg:padded_decomposition} and its analysis are similar to a partitioning algorithm proposed in \cite{Bartal96}, which was shown to have a low probability of separating nodes in a close neighborhood.  
\iflipics
 Due to space constraints, proofs can be found in Appendix~\ref{app:padded}.
 \fi
 
\begin{algorithm}[h]
\caption{Sampling from a $(k, \epsilon)$-padded decomposition of $G=(V,E)$.}
\label{alg:padded_decomposition}
Let $\pi: V \rightarrow [n]$ be an arbitrary bijection from $V$ to $[n]$, and let $r= (\frac{2}{\epsilon}) k$.\\
\For{$v \in V$}{
  Sample $z_v$ independently from a distribution with probability density function $p(z_v)= \left(\frac{n}{n-1}\right)\frac{e^{-{z_v}/r}}{r}$.\\
  Set the radius ${r_v= \min(z_v, r \ln n +k)}$.
  }
  \For{$u \in V$}{
   Node $u$ joins cluster $C(v)$, such that $d(v,u) \leq r_v \land (\pi(v) < \pi(w) \  \forall w\neq v \text{ s.t. }d(w,u) \leq r_w)$. 
    \tcp{Node $u$ joins the cluster $C(v)$, with cluster center $v$, which is the first node in the permutation where $d(v,u) \leq r_v$.}
  }
\end{algorithm}

For any partition $P$ constructed by Algorithm \ref{alg:padded_decomposition}, each cluster is clearly $C(v)$ for some $v \in V$.  We call this special node $v$ the \emph{center} of cluster $C(v)$. Later, we will use the center of each cluster for solving locally defined convex programs.
\begin{lemma} \label{ball-preserve}
Algorithm \ref{alg:padded_decomposition} partitions a given undirected graph $G=(V,E)$ into a partition $P$ such that $P$ is sampled from a $(k, \epsilon)$-padded decomposition.
\end{lemma}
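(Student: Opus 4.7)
My plan is to verify the two conditions in Definition~\ref{def:padded_decomposition} by direct analysis of the random radii. First, observe that the partition produced by Algorithm~\ref{alg:padded_decomposition} is always well-defined: since $d(u,u) = 0 \leq r_u$, each $u$ has at least one candidate cluster, and the tie-break by $\pi$ is unique.

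The diameter condition is immediate. For any $u \in C(v)$ we have $d(v,u) \leq r_v$, so by the triangle inequality $\operatorname{diam}(C(v)) \leq 2 r_v \leq 2(r \ln n + k) = 4(k/\epsilon) \ln n + 2k = O((k/\epsilon) \log n)$.

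For the padding condition, I would fix $u \in V$ and introduce $\alpha_v := \min_{w \in B(u,k)} d(v,w)$ and $\beta_v := \max_{w \in B(u,k)} d(v,w)$ for each $v \in V$; the triangle inequality gives $\beta_v - \alpha_v \leq 2k$. The crucial observation is that $v$ sees some point of $B(u,k)$ iff $r_v \geq \alpha_v$, and $v$ sees all of $B(u,k)$ iff $r_v \geq \beta_v$. From the joining rule, $B(u,k)$ lies in a single cluster exactly when the $\pi$-minimum vertex of the (always nonempty) set $\{v : r_v \geq \alpha_v\}$ in fact satisfies $r_v \geq \beta_v$; equivalently, $B(u,k)$ is split iff this $\pi$-minimum vertex lands in the ``cut zone'' $r_v \in [\alpha_v, \beta_v)$. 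Listing the vertices in $\pi$-order as $v^{(1)}, \ldots, v^{(n)}$ and using independence of the $r_v$'s, I apply the union bound
\begin{align*}
\Pr[B(u,k)\text{ split}] \;\leq\; \sum_{i=1}^n \Pr\!\big[r_{v^{(i)}} \in [\alpha_{v^{(i)}}, \beta_{v^{(i)}})\big] \prod_{j<i} \Pr\!\big[r_{v^{(j)}} < \alpha_{v^{(j)}}\big].
\end{align*}

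The key estimate I need is the exponential-tail inequality $\Pr[r_v \in [\alpha_v, \beta_v)] \leq (2k/r)\,\Pr[r_v \geq \alpha_v]$, coming from $1 - e^{-t} \leq t$ applied to the density in Algorithm~\ref{alg:padded_decomposition}. Once this is in hand the union bound telescopes: the surviving factor becomes $\sum_i \Pr[v^{(i)}\text{ is the }\pi\text{-minimum seer}]$, a sum of disjoint events whose union is ``some seer exists'' and hence at most $1$, yielding $\Pr[B(u,k)\text{ split}] \leq 2k/r = \epsilon$. The main obstacle will be verifying the tail inequality uniformly across the three truncation regimes ($\beta_v \leq r \ln n + k$; $\alpha_v \leq r \ln n + k < \beta_v$; and $\alpha_v > r \ln n + k$, which is trivial since $v$ cannot be a seer). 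The $+k$ offset in the truncation point and the $(n/(n-1))$ normalization appear engineered precisely so that any mass forced into the atom at $r \ln n + k$ lands in the ``sees all'' region rather than the cut region, keeping the inequality (and hence the final bound) intact.
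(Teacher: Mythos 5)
Your overall architecture -- characterize a split by the $\pi$-minimum ``seer'' of $B(u,k)$, apply a disjointness/telescoping bound over which vertex is that seer, and control each term with a per-vertex tail inequality -- is structurally the same argument the paper runs, just unrolled: the paper packages it as a backward induction on $t$ (events $A_t, M^{cut}_t, M^{ex}_t, X_t$), whereas you expand the recursion into a single sum. So you have the right skeleton, and your reduction of the diameter bound and the seer/split characterization are correct.

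The gap is the tail inequality itself: $\Pr[r_v \in [\alpha_v, \beta_v)] \leq (2k/r)\,\Pr[r_v \geq \alpha_v]$ is simply false under the distribution used. You flag the three truncation regimes and assert that the $+k$ offset and the $n/(n-1)$ normalization push the atom at $r\ln n + k$ into the ``sees all'' region, but that is exactly backwards. When $\alpha_v \leq r\ln n + k < \beta_v$ (which can occur: taking $w, w' \in B(u,k)$ witnessing $\alpha_v$ and $\beta_v$ gives $\beta_v - \alpha_v \leq 2k$, and the range $d(v,u) \in (r\ln n, r\ln n + 2k]$ realizes $\alpha_v \in (r\ln n - k, r\ln n]$ with $\beta_v > r\ln n + k$), the truncation forces $r_v \leq r\ln n + k < \beta_v$ surely, so $v$ can \emph{never} see all of $B(u,k)$. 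In that regime the cut region and the seer region coincide, so your inequality asserts $\Pr[r_v \geq \alpha_v] \leq (2k/r)\Pr[r_v \geq \alpha_v]$ with $2k/r = \epsilon < 1$, which is false whenever the seer probability is positive. Even in the clean regime $\beta_v \leq r\ln n$, plugging in the normalized density gives $\Pr[r_v \geq \alpha_v] = \tfrac{n}{n-1}\bigl(e^{-\alpha_v/r} - \tfrac1n\bigr)$ while $\Pr[r_v \in [\alpha_v,\beta_v)] \leq \tfrac{n}{n-1}\cdot\tfrac{2k}{r}\,e^{-\alpha_v/r}$; the $-1/n$ is not dominated, and for $\alpha_v$ near $r\ln n$ the right-hand side collapses to essentially zero while the left does not. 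This is precisely the place where the paper's proof does extra bookkeeping that your clean telescoping omits: they replace $\Pr[r_v \geq \alpha_v]$ by the \emph{unnormalized} exponential via $\tilde d(v,u)$, invoke $e^{-(\tilde d(v,u)-k)/r} \geq 1/n$, and absorb the resulting $1/n$-slack into the inductive target $\Pr(X_t) \leq (2 - \tfrac{t}{n-1})\tfrac{2k}{r}$, which shrinks as $t$ grows. That extra slack (a factor of $2$ in the constant, i.e., effectively $O(\epsilon)$ rather than exactly $\epsilon$) is what your argument needs but currently lacks. Your approach can be repaired -- for instance by telescoping against the unnormalized mass $e^{-\alpha_v/r}$ rather than $\Pr[r_v \geq \alpha_v]$, at the cost of picking up a $(n/(n-1))^n = O(1)$ multiplicative loss which can be absorbed into $r$ -- but as written, the key estimate you rely on does not hold, and the ``engineered so the atom lands in sees-all'' claim is the specific false step.
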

\iflipics
\else
\begin{proof}
The first property in Definition \ref{def:padded_decomposition} is directly implied by the definition of $r_v$ for all nodes $v \in V$. 

 For the second property we consider an arbitrary node $u \in V$, and compute the probability that the ball $B(u,k)$ is not in any of the clusters in $P$. Consider an arbitrary value $1 \leq t \leq n$, let $v \in V$ be the node such that $t=\pi(v)$, and let $z=z_v$ be the real number sampled by $v$. 
Also, for any $x,y \in V$, let ${\tilde{d}(x,y)= \min(d(x,y), r \ln n +k)}$. Let us also order the clusters based on their center's position in the permutation, so that $C_t$ is the cluster corresponding to $t=\pi(v)$ (i.e.~$v$ is the cluster center of $C_t$). We define $X_t$ to be the event that if $B(u,k)$ is not in the first $t-1$ clusters, then it is also not in any of the remaining clusters. We provide a recursive bound on $X_t$ based on $X_{t+1}$. Then we will get the second property once we show $\Pr(X_0) \leq \epsilon$.
 We need to define the following events:
\begin{itemize}
\item $A_t: B(u,k)$ does not intersect with any of the clusters $C_1,..,C_{t-1}$.
\item $M^{cut}_t: (\tilde{d}(v,u)-k \leq z < \tilde{d}(v,u)+k \mid A_t)$.
\item $M^{ex}_t: (z<\tilde{d}(v,u)-k \mid A_t)$.
\item $X_t: (\nexists j \geq t: B(u,k) \subseteq C_j \mid A_t)$.
\end{itemize}
In other words, conditional on the event that $B(u,k)$ is not in any of the first $t-1$ clusters, either $B(u,k) \subseteq C_t$, or else one the following two events will occur: $M^{cut}_t$ is the event that $B(u,k)$ partially intersects $C_t$, and $M^{ex}$ is the event that $B(u,k)$ does not intersect $C_t$.
Now the event $X_t$ occurs only when either $M^{cut}_t$ occurs or both $M^{ex}_t$ and $X_{t+1}$ occur (i.e.~when $B(u,k)$ is not in $C_t$ or any of the next clusters). Hence we can write $\Pr(X_t) \leq \Pr(M^{cut}_t) + \Pr(M^{ex}_t)\Pr(X_{t+1})$.
Recall that $z$ is independently sampled from the density function $p(z_v)= \left(\frac{n}{n-1}\right)\frac{e^{-{z_v}/r}}{r}$, and thus $M^{cut}$ can be written as follows:
\begin{align*}
\Pr(M^{cut}_t) &= \int_{\tilde{d}(v,u)-k}^{\tilde{d}(v,u)+k} p(z)d_z
			   = \left(\frac{n}{n-1}\right)\left(1-e^{-2k/r}\right)e^{-(\tilde{d}(v,u)-k)/r}
			   \leq \left(\frac{n}{n-1}\right)\frac{2k}{r} e^{-(\tilde{d}(v,u)-k)/r}.
\end{align*}
Similarly, we can write,
\begin{align*}
\Pr(M^{ex}_t) &= \int_{0}^{\tilde{d}(v,u)-k} p(z)d_z
			   = \left(\frac{n}{n-1}\right)\left(1-e^{-(\tilde{d}(v,u)-k)/r}\right).
\end{align*}
We now inductively prove that $\Pr(X_t) \leq (2- \frac{t}{n-1})(\frac{2k}{r})$. If $t < n$ is the last step, then $\Pr(X_t)=0$, and thus this bound clearly holds. Assume that the bound is true for $X_{t+1}$, we show that then it also holds for $X_t$. We have,
\begin{align*}
\Pr(X_t) &\leq \Pr(M^{cut}_t) + \Pr(M^{ex}_t)\Pr(X_{t+1}) \leq \left(\frac{n}{n-1}\right)\left(\frac{2k}{r}\right)\left(1+ \frac{n-t-2}{n-1}\left(1-e^{-(\tilde{d}(v,u)-k)/r}\right)\right).
\end{align*}
Since $e^{-(\tilde{d}(v,u)-k)/r} \geq e^{-(\ln n)} \geq 1/n$, we get that
$\Pr(X_t) \leq \left(2- \frac{t}{n-1}\right)\left(\frac{2k}{r}\right)$. The second property is then implied by the fact that $\Pr(X_0) \leq \frac{2k}{r}= \frac{2k}{2k(1/\epsilon)} = \epsilon$.
\end{proof}
\fi

We will now use an idea similar to the one used in~\cite{DK2011} to make Algorithm~\ref{alg:padded_decomposition} distributed. In \cite{DK2011} they only considered the special case of $k=1$ and $\epsilon = 1/2$, which is why we cannot simply use their result as a black box.
\begin{lemma} \label{lem:decomp-distributed}
There is an algorithm in the $\mathcal{LOCAL}$ model that runs in $O(\frac{k}{\epsilon} \ln n)$ rounds and samples from a $(k, \epsilon)$-padded decomposition (so every node knows the cluster that it is in).
\end{lemma}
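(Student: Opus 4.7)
The plan is to show that Algorithm~\ref{alg:padded_decomposition} can be simulated in the $\mathcal{LOCAL}$ model by having each node locally gather enough information about its neighborhood to determine its cluster assignment. The key quantitative observation is that, by construction, every node's radius satisfies $r_v \le r\ln n + k = (2k/\epsilon)\ln n + k = O((k/\epsilon)\log n)$. In particular, if node $u$ is ever assigned to a cluster with center $v$, then $d(v,u) \le r_v \le r\ln n + k$, so only nodes within distance $r\ln n + k$ of $u$ could possibly be the center of $u$'s cluster. This gives the target round complexity.

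The distributed algorithm I would propose is the following. First, in round $0$, every node $v$ samples its value $z_v$ independently from the specified truncated exponential density, and sets $r_v = \min(z_v, r\ln n + k)$ using only local computation. Assuming unique node identifiers (standard in $\mathcal{LOCAL}$), the permutation $\pi$ can simply be taken to be the order induced by the IDs, so no coordination is needed to produce $\pi$. Next, the algorithm runs $T := r\ln n + k = O((k/\epsilon)\log n)$ rounds of full-information flooding: in each round, every node forwards to every neighbor the union of all $(v, \pi(v), r_v, \text{position of } v)$ tuples it has learned so far, together with the induced subgraph of $G$ on the nodes seen so far. Because messages in $\mathcal{LOCAL}$ have unbounded size, after $T$ rounds each node $u$ knows the subgraph induced on $B(u, T)$, together with $r_v$ and $\pi(v)$ for every $v \in B(u,T)$. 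Finally, each $u$ locally computes $d(v,u)$ for every $v \in B(u,T)$ using the collected topology, restricts to those $v$ with $d(v,u) \le r_v$, and picks the one minimizing $\pi(v)$ as its cluster center.

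To argue correctness, I would verify that the output is identical to the centralized Algorithm~\ref{alg:padded_decomposition}. The only step that could go wrong is that $u$ chooses the wrong center because it is missing some relevant $v$. But any $v$ that could claim $u$ in the centralized algorithm satisfies $d(v,u) \le r_v \le r\ln n + k = T$, hence $v \in B(u,T)$ and is known to $u$. Moreover, for such $v$, the distance $d(v,u)$ computed by $u$ from its local view equals the true graph distance, because any shortest path from $v$ to $u$ has length at most $T$ and therefore lies entirely inside $B(u,T)$, whose topology $u$ has learned. Thus the cluster assignments made locally coincide exactly with those of Algorithm~\ref{alg:padded_decomposition}, and Lemma~\ref{ball-preserve} gives the padded-decomposition guarantees.

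The only mildly delicate point, and the one I would be careful to state explicitly, is the justification that $u$'s $T$-neighborhood contains every center that could possibly claim it and every shortest path to such a center; without the truncation $r_v \le r\ln n + k$ this would fail, which is precisely why the truncation was built into the sampling step. Everything else is straightforward: the round count is $T = O((k/\epsilon)\log n)$ as claimed, all computation is local, and no heavy machinery beyond unbounded-message flooding is used.
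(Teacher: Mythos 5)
Your proof is correct and follows essentially the same approach as the paper: use IDs to fix the permutation $\pi$, exploit the truncation $r_v \le r\ln n + k$ to bound communication to $O((k/\epsilon)\log n)$ rounds, simulate Algorithm~\ref{alg:padded_decomposition} exactly, and invoke Lemma~\ref{ball-preserve} for the decomposition guarantees. The only difference is implementation detail: the paper has each node $v$ push its ID to all of $B(v,r_v)$ and receivers pick the smallest received ID, whereas you gather the full topology of $B(u,T)$ and decide locally --- both are valid in the $\mathcal{LOCAL}$ model and yield the same clustering.
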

\begin{proof}
Without loss of generality, we assume that all nodes have unique IDs\footnote{We can make this assumption since nodes can each draw an ID from a suitably large space, so the probability of a collision is small enough that it does not affect the guarantees required by a $(k, \epsilon)$-padded decomposition. In our model, we assume that nodes know the size of the network}. The sequence of IDs in ascending order will determine the permutation $\pi$ used in Algorithm \ref{alg:padded_decomposition}, i.e.~if $\text{ID}_u < \text{ID}_v$ then $\pi(u) < \pi(v)$. The algorithm proceeds as follows until all nodes have been assigned to a cluster: each node $u \in V$ chooses a radius $r_u$ based on the distribution defined in Algorithm \ref{alg:padded_decomposition}. Then every $u \in V$ simultaneously sends a message containing ID$_u$ to all nodes in $B(u,r_u)$. After receiving all the messages, each node chooses the node with the smallest ID as the cluster center. Then Lemma \ref{ball-preserve} implies that the clusters satisfy the properties of a $(k, \epsilon)$-padded decomposition. Since the radius that each node chooses is $O((k/\epsilon) \log n)$, and each node only communicates with nodes within its radius, the running time in the $\mathcal{LOCAL}$ mode is $O((k/\epsilon) \log n)$.
\end{proof}

\section{Distributed distance bounded network design convex programming} \label{sec:solving-convex}
 In this section we prove Theorem~\ref{thm:main-informal}, giving an algorithm similar to~\cite{DK2011} which can almost optimally solve distance-bounded network design convex programs.  We first make all definitions formal in Section~\ref{sec:program-def}, and in particular define formally the class of objective functions where our results hold.  Then in Section~\ref{sec:CP-distributed} we give a distributed algorithm which solved these programs up to arbitrarily small error.  \iflipics All missing proofs can be found in Appendix~\ref{app:solving-convex}. \fi
%
  
\subsection{Distance bounded network design convex programs} \label{sec:program-def}
 We will first describe a general class of objective functions that our algorithm applies to.
  For a graph $G=(V,E)$ and a set $S \subseteq V$, we let $E(S)$ denote the set of edges in the subgraph of $G$ induced by $S$. Recall that for a vector $x\in \mathbb{R}^m$ (where $m=|E|$), we define $x^{S}=(x_e^{S})_{e \in E} \in \mathbb{R}^m$ to be the vector where $x^{S}_e =0$ if $e \not\in E(S)$ and  $x^{S}_e=x_e$ if $e \in E(S)$. 
\begin{definition} \label{def_partition_func}
  Given a graph $G=(V,E)$, a function $g: \mathbb{R}^m \mapsto \mathbb{R}$ is \textit{convex partitionable} with respect to $G$ if $g$ is a non-decreasing\footnote{Let $f(x_1,...,x_k)$ be a multivariate function. We will say $f$ is nondecreasing if the following holds: if $x_i \leq x'_i$ for all $1 \leq i \leq k$, then $f(x_1,...,x_k) \leq f(x'_1,...,x'_k)$.} and convex  function with the following property:~for all partitions $\sigma = \{\sigma_1,...,\sigma_\ell\}$ of nodes in $V$, there exists a non-decreasing function $h_\sigma: \mathbb{R}^\ell \mapsto \mathbb{R}$, s.t.~${g(x) = h_\sigma(g(x^{\sigma_1}), g(x^{\sigma_2}),...,g(x^{\sigma_\ell}))}$ for all $x=(x_e)_{e \in E}$ where $x_e=0$ for any edge $e$ with endpoints in different clusters of $\sigma$ (equality does not need to hold for vectors $x$ with nonzero values on edges between clusters).
\end{definition}
 Convex partitionable functions for graphs are a natural class of functions for distributed computing purposes. Moreover, this class includes many types of objective functions that are of interest in network design problems, including $p$-norms and linear functions. For example, if the function $g$ is the $p$-norm with $p \in \mathbb{Z}_{\geq 0}$, then it is easy to verify that by setting the function $h_\sigma$ to also be the $p$-norm for any partition $\sigma$ of $V$, the conditions of Definition \ref{def_partition_func} are satisfied. Note that an unweighted sum is the $1$-norm, and the $\max$ function is the infinity norm, and hence they will also satisfy the conditions of Definition \ref{def_partition_func}. Similarly, in case of linear functions, it is easy to see that conditions of Definition \ref{def_partition_func} are satisfied by setting $h_\sigma$ to be the \emph{unweighted} sum.
 
There are also other, less trivial examples.  For example, it is not hard to show the $p$-norm of the \emph{degree vector} (rather than just the edge vector) is also convex partitionable with respect to $G$. An important special case of this is the $\infty$-norm of the degree vector, i.e., the maximum degree. Since we use this objective in some of our applications (i.e., for the \DEGREE problem), we give a short proof of this case in \iflipics Appendix~\ref{app:solving-convex}\else Lemma~\ref{lem:max-deg-partition}\fi.  For an integral vector $x \in \mathbb{R}^m$ we can write ${g(x)=\max_{v \in V} \deg(v)}$. By generalizing this notation to all $x \in \mathbb{R}^m$, we can define fractional node degrees as ${\deg(v)=\sum_{u:(v,u) \in E} x_{(v,u)}}$\footnote{Here we are considering out-degree of nodes in a directed graph. It is easy to see that Lemma \ref{lem:max-deg-partition} also holds in cases of in-degree only or sum of out-degree and in-degree. The later is the case we are interested for Section \ref{sec:applications}.}.
\begin{lemma} \label{lem:max-deg-partition}
 Given a graph $G=(V,E)$, the function $g(x)= \max_{v \in V} (\sum_{u:(v,u) \in E} x_{(v,u)})$ is convex partitionable w.r.t.~$G$. 
\end{lemma}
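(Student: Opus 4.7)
The plan is to check the three clauses of Definition~\ref{def_partition_func} in turn: convexity, monotonicity, and existence of the function $h_\sigma$. The first two are essentially free. Writing $g(x) = \max_{v\in V} \langle a_v, x\rangle$ where $a_v \in \{0,1\}^m$ is the indicator vector of the edges incident to $v$, $g$ is a pointwise maximum of linear functions, hence convex. Monotonicity on the non-negative orthant (where the convex programs live) is immediate since each coordinate $x_e$ appears only with a non-negative coefficient in each linear function $\langle a_v, x\rangle$.

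The substantive step is to exhibit $h_\sigma$ for a given partition $\sigma = \{\sigma_1,\dots,\sigma_\ell\}$ of $V$. My proposal is the simplest possible one: take $h_\sigma(y_1,\dots,y_\ell) := \max_{1 \leq j \leq \ell} y_j$, which is trivially non-decreasing. The verification that $g(x) = h_\sigma(g(x^{\sigma_1}),\dots,g(x^{\sigma_\ell}))$ for every $x$ with $x_e = 0$ on all edges crossing between clusters reduces to one bookkeeping observation.

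The key observation is that, under the hypothesis on $x$, each vertex's degree is preserved by restriction to its own cluster. Concretely, if $v \in \sigma_j$, then every edge $(v,u)$ either lies in $E(\sigma_j)$ (i.e.\ $u \in \sigma_j$), in which case $x^{\sigma_j}_{(v,u)} = x_{(v,u)}$, or it crosses clusters, in which case $x_{(v,u)} = 0$ by assumption and $x^{\sigma_j}_{(v,u)} = 0$ by definition of the restriction. Hence $\deg_{x^{\sigma_j}}(v) = \sum_{u:(v,u)\in E} x^{\sigma_j}_{(v,u)} = \sum_{u:(v,u)\in E} x_{(v,u)} = \deg_x(v)$. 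On the other hand, if $v \notin \sigma_j$, then no edge incident to $v$ lies in $E(\sigma_j)$, so $\deg_{x^{\sigma_j}}(v) = 0$.

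Putting these together (and using $x \geq 0$, which holds throughout the convex programs, so that non-cluster vertices contribute non-positive values that do not dominate the max),
$$g(x^{\sigma_j}) \;=\; \max_{v\in V}\deg_{x^{\sigma_j}}(v) \;=\; \max_{v\in \sigma_j}\deg_x(v),$$
and therefore
$$h_\sigma\bigl(g(x^{\sigma_1}),\dots,g(x^{\sigma_\ell})\bigr) \;=\; \max_j \max_{v\in \sigma_j} \deg_x(v) \;=\; \max_{v\in V}\deg_x(v) \;=\; g(x),$$
where the last equality uses that $\sigma$ is a partition of $V$. The only mild obstacle is the non-negativity caveat in the previous display; I would handle it by an explicit remark that in the setting of these convex programs $x \geq 0$ is a hard constraint, so the zero degrees of out-of-cluster vertices never cause the $\max$ in $g(x^{\sigma_j})$ to be achieved outside $\sigma_j$. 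Everything else is pure unpacking of definitions.
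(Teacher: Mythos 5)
Your proof is correct and takes essentially the same route as the paper's: both exhibit $h_\sigma = \max$ and reduce the verification to the observation that each vertex's fractional degree is computed entirely within its own cluster once cross-cluster edges are zeroed out. Your explicit remark that non-negativity of $x$ is needed so that out-of-cluster vertices (which have restricted degree $0$) cannot dominate the max is a small point of extra care that the paper's proof glosses over by directly writing $g(x^{\sigma_i}) = \max_{v\in\sigma_i}(\cdots)$, but it is consistent with the paper's convention that $g$ is defined on the non-negative orthant.
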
 
\iflipics
\else
\begin{proof} 
Let $\sigma = \{\sigma_1,...,\sigma_\ell\}$ be a partition of nodes in $V$. For all $1 \leq i \leq \ell$, we have ${g(x^{\sigma_i})=\max_{v \in \sigma_i}(\sum_{u:(v,u) \in E} x^{\sigma_i}_{(v,u)})}$. Then we can set $h_\sigma (y) =\max_{i \in [\ell]} (y_i), y \in \mathbb{R}^\ell$, where $y_i$ is the $i$-th coordinate of $y$. Let $\sigma(v) \in \sigma$ be the cluster that node $v$ belongs to. For all $x=(x_{(u,v)})_{(u,v) \in E}$, where  $x_{(u,v)}=0$ for any $(u,v) \in E$ s.t.~$\sigma(u) \neq \sigma(v)$ , we have,
\begin{align*}
g(x)&= \max_{v \in V} \left(\sum_{u:(v,u) \in E} x_{(v,u)} \right)= \max_{\sigma_i \in \sigma} \left( \max_{v \in \sigma_i} \left(\sum_{u:(v,u) \in E} x^{\sigma_i}_{(v,u)} \right)\right)\\
&=  \max_{\sigma_i \in \sigma} \left( g\left(x^{\sigma_i}\right) \right)= h_\sigma(g(x^{\sigma_1}), g(x^{\sigma_2}),..., g(x^{\sigma_\ell})). 
\end{align*}
  It is also easy to see that the function $h_\sigma$ is convex and non-decreasing. Hence $h_\sigma$ satisfies the conditions in Definition \ref{def_partition_func}.
\end{proof}
\fi
%

Now that this class of functions has been defined, we can formally define the class of distance-bounded network design convex programs.   
\begin{definition}\label{general-LP}
 Let $ \mathcal{S} \subseteq V\times V$ be a set of pairs in the graph $G=(V,E)$, and for any pair $(u,v) \in \mathcal{S}$ let $\mathcal{P}_{u,v}$ be a set of paths from $u$ to $v$, which we sometimes call the set of ``allowed'' paths. Let $g$ be a non-decreasing convex-partitionable function of ${x}=(x_e)_{e \in E}$ with $g(\vec{0})=0$.  Then we call a convex program of the following form a \textit{distance bounded network design CP}: 
 \begin{align*} 
\min \quad &g({x})\\
s.t &\sum_{P \in \mathcal{P}_{u,v}: e\in P} f_P\leq x_e  &\forall(u,v) \in  \mathcal{S}, \forall e \in E \\
&\sum_{P \in \mathcal{P}_{u,v}} f_P \geq 1 &\forall(u,v) \in  \mathcal{S}\\
&x_e \geq 0 &\forall	e \in E\\
&f_P \geq 0 &\forall(u,v)\in  \mathcal{S}, \forall P\in \mathcal{P}_{u,v}
\end{align*}
\end{definition}
As we will see in Section \ref{sec:applications}, many network design problems use linear (or convex) programming relaxations that satisfy the conditions of Definition \ref{general-LP}.  A key parameter of such a program is the length of the longest allowed path $D= \max_{(u,v) \in \mathcal{S}} \max_{p \in \mathcal{P}_{u,v}} \ell(p)$ (where $\ell(p)$ is the length of path $p$).

\subsection{Distributed Algorithm} \label{sec:CP-distributed}
In order to solve these convex programs in a distributed manner, we will first use padded decompositions to form a local problem using a simple distributed algorithm. Let $P$ be a partition sampled from a $(k, \epsilon)$-padded decomposition (in particular, obtained by Lemma~\ref{lem:decomp-distributed}), where $0 < \epsilon \leq 1$. Recall that for each cluster $C \in P$, $E(C)= \{ (u,v) \in E \mid u , v \in C\}$.  We define $G(C)$ to be the subgraph induced by $C$. \iflipics We prove the following lemma in Appendix~\ref{app:solving-convex}. \fi
\begin{lemma}\label{lem:cluster_centers}
For each cluster $C$ sampled from a $(k, \epsilon)$-padded decomposition, there is a distributed algorithm running in $O(\frac{k}{\epsilon} \log n)$ rounds so that every cluster center knows $G(C)$. 
\end{lemma}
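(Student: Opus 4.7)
The plan is to exploit the fact that Algorithm~\ref{alg:padded_decomposition} assigns every node $u$ to the cluster $C(v)$ of a center $v$ only if $d(v,u) \leq r_v$, and $r_v = \min(z_v, r\ln n + k) \leq r\ln n + k = \frac{2k}{\epsilon}\ln n + k = O(\tfrac{k}{\epsilon}\log n)$ in the underlying communication graph $G$. Thus the distance (in $G$) from each cluster center $v$ to every member of $C(v)$ is at most $O(\tfrac{k}{\epsilon}\log n)$, so $v$ can afford to do a BFS of that depth in the \LOCAL model.

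First I would have each node append its cluster-ID (the ID of its center, which it already knows from the distributed construction in Lemma~\ref{lem:decomp-distributed}) and send it to all of its neighbors in one round, so that each node knows the cluster membership of every neighbor. Next, every node performs a standard ``learn-your-$t$-neighborhood'' BFS in $G$ with $t = r\ln n + k = O(\tfrac{k}{\epsilon}\log n)$: in each round, each node forwards to its neighbors the entire set of (node, neighbor, cluster-ID) tuples it has seen so far. Because the \LOCAL model imposes no restriction on message size, after $t$ rounds every node, and in particular every cluster center $v$, knows the full topology of its $t$-neighborhood in $G$ together with the cluster label of each node seen.

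Finally, $v$ extracts $G(C)$ by restricting the learned topology to vertices labeled with $v$'s own cluster-ID and keeping only edges between such vertices. Since every $u \in C(v)$ satisfies $d(v,u)\leq r_v\leq t$, all vertices of $C$ lie inside $v$'s BFS ball, and the edges of $G(C)$ are exactly the edges of $G$ with both endpoints in $C$, which $v$ has also seen. Hence $v$ reconstructs $G(C)$ exactly. The total round complexity is $O(\tfrac{k}{\epsilon}\log n)$.

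The one subtlety that needs to be highlighted is that a shortest path in $G$ from $v$ to a cluster-mate $u$ may leave $C$, so we cannot afford to restrict the BFS to $G(C)$; however, this is a non-issue because the \LOCAL model permits communication over any edge of $G$, and our bound $r_v = O(\tfrac{k}{\epsilon}\log n)$ is on distance in $G$ rather than in $G(C)$. Everything else is routine bookkeeping about message propagation.
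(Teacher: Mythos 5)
Your proposal is correct and follows essentially the same route as the paper: bound the distance in $G$ from each cluster center to every member of its cluster by $O(\frac{k}{\epsilon}\log n)$ (the paper invokes the diameter property of the decomposition, you derive it directly from the radius bound $r_v \leq r\ln n + k$), then use the unbounded message size of the \LOCAL model to gather all incident-edge information at the center within that many rounds. Your explicit handling of the fact that shortest paths may leave the cluster is a nice clarification, but the argument is the same in substance.
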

\iflipics
\else
\begin{proof}
 The first property of $(k, \epsilon)$-padded decompositions implies that for all nodes $u \in C$, we have $d(u,v)=O((k/\epsilon) \log n)$, where $v$ is the center of cluster $C$. Each node $u \in C$ that determines $v$ as the center of the cluster it belongs to, will send the information of its incident edges to $v$. Since there is no bound on the size of the messages being forwarded, this can be done in $O((k/\epsilon) \log n)$ time. 
\end{proof}
\fi

 Let CP($G$) be a distance bounded network design CP defined on graph $G=(V,E)$. We will define local convex programs based on a partition $P$ of $G$ that is sampled from a $(D, \lambda)$-padded decomposition. The value of $0 < \lambda \leq 1$ will be set later based on the parameters of our distributed algorithm.  For each $C \in P$, let CP$(C)$ be CP$(G)$ defined on $G(C)$, but where only demands corresponding to any pair $(u,v) \in \mathcal{S}$ in which $B(u,D)$ is fully contained in $C$ are included. We denote the set of these demands by $N(C)$, more precisely, ${N(C)= \{ (u,v) \in \mathcal{S} \mid B(u,D) \subseteq C \}}$. The objective will then be to minimize $g(x)=g\left((x_e)_{e \in E(C)}\right)$.  In other words $CP(C)$ is defined as follows:
 \begin{align*}
\min \quad &g(x)\\
s.t &\sum_{P \in \mathcal{P}_{u,v}: e\in P} f_P\leq x_e  & \forall(u,v) \in N(C), \forall e \in E(C) \\
&\sum_{P \in \mathcal{P}_{u,v}} f_P \geq 1 & \forall(u,v) \in N(C)\\
&x_e \geq 0 &\forall e \in E(C)\\
&f_P \geq 0 &\forall(u,v)\in N(C), \forall P\in \mathcal{P}_{u,v}
\end{align*}

There is a technical subtlety about computing the function $g$ on each cluster, which is the fact that a solution $\langle x^C, f^C \rangle$ of CP($C$) is only defined on $G(C)$. While in practice $x^C$ is a vector defined only on edges in $E(C)$, in our analysis we will assume that $x^C$ is a vector in $\mathbb{R}^m$ (where $|E|=m$) and $x_e^C=0$ for all $e \not \in E(C)$ (this is possible for all the functions we care about). It will become clear that this distinction is also needed when we try to compare the local solutions to the global solutions.
 The following lemma is similar to Lemma 3.8 in~\cite{DK2011}, and we show that it holds for our modified definition of local convex programs and for generalized objective functions that satisfy Definition \ref{def_partition_func}.
\begin{lemma}\label{lem:cluster_upper}
Let $\langle x^*, f^*\rangle$ be an optimal solution of $CP(G)$ and let ${x^*}^C=(x^*_e)_{e \in E(C)}$. For each cluster $C \in P$, let $\langle \tilde{x}^C, \tilde{f}^C \rangle$ be an optimal solution of CP($C$). Then $g(\tilde{x}^C) \leq g({x^*}^C)$. 
\end{lemma}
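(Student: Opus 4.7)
The plan is to exhibit a feasible solution for $CP(C)$ whose value equals $g({x^*}^C)$; optimality of $\langle \tilde x^C, \tilde f^C\rangle$ for $CP(C)$ then immediately yields the claim. The natural candidate is the restriction $\langle \hat x, \hat f\rangle$ of the global optimum to the cluster: set $\hat x_e = x^*_e$ for every $e \in E(C)$ and $\hat f_P = f^*_P$ for every $(u,v) \in N(C)$ and every $P \in \mathcal P_{u,v}$. Extending $\hat x$ by zeros outside $E(C)$ makes it exactly equal (as a vector in $\mathbb{R}^m$) to ${x^*}^C$, so if feasibility holds then the objective value of $\hat x$ is precisely $g({x^*}^C)$, which is what we want.

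The key geometric step is verifying that this restriction is actually feasible for $CP(C)$, i.e.\ that no path used by the global flow for a demand in $N(C)$ leaves the cluster. This is where the definition $N(C) = \{(u,v) \in \mathcal S : B(u,D) \subseteq C\}$ together with the definition $D = \max_{(u,v)}\max_{P \in \mathcal P_{u,v}} \ell(P)$ is doing the work: any $P \in \mathcal P_{u,v}$ starts at $u$ and has length at most $D$, so every vertex of $P$ lies in $B(u,D)$; since $B(u,D) \subseteq C$, every edge of $P$ lies in $E(C)$. Hence the assignment $\hat f_P = f^*_P$ only places flow on paths inside $G(C)$, which is exactly what $CP(C)$ allows.

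Given this containment, the remaining feasibility checks are immediate restrictions of the corresponding global constraints. For every $(u,v) \in N(C)$ the covering constraint $\sum_{P \in \mathcal P_{u,v}} \hat f_P = \sum_{P \in \mathcal P_{u,v}} f^*_P \geq 1$ holds because $\langle x^*, f^*\rangle$ satisfies it globally. For every $e \in E(C)$ and $(u,v) \in N(C)$ the capacity constraint
\[
\sum_{P \in \mathcal P_{u,v}:\, e \in P} \hat f_P \;=\; \sum_{P \in \mathcal P_{u,v}:\, e \in P} f^*_P \;\leq\; x^*_e \;=\; \hat x_e
\]
is inherited from the same constraint in $CP(G)$. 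Nonnegativity is preserved trivially.

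With $\langle \hat x, \hat f\rangle$ feasible for $CP(C)$ and $\hat x = {x^*}^C$ in $\mathbb{R}^m$, optimality of $\langle \tilde x^C, \tilde f^C\rangle$ gives $g(\tilde x^C) \leq g(\hat x) = g({x^*}^C)$, as desired. The only subtlety worth being careful about is the embedding convention $\hat x \in \mathbb{R}^m$ vs.\ $\hat x \in \mathbb{R}^{|E(C)|}$; the paragraph preceding the lemma fixes this by setting coordinates outside $E(C)$ to zero, and since $g$ is nondecreasing with $g(\vec 0) = 0$, evaluating $g$ on either representation yields the same value for the purposes of the comparison. No property of $g$ beyond being well-defined and monotone is actually needed in this argument, so the main obstacle is essentially bookkeeping rather than analysis.
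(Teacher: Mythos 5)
Your proof is correct and follows essentially the same route as the paper: restrict the global optimum $\langle x^*, f^*\rangle$ to the cluster, check feasibility for $CP(C)$, and invoke optimality of $\langle \tilde{x}^C, \tilde{f}^C\rangle$. You even spell out the step the paper leaves implicit (that every allowed path for a demand in $N(C)$ stays inside $C$ because its length is at most $D$ and $B(u,D)\subseteq C$), so nothing is missing.
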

\begin{proof}
 We argue that the vector $\langle {x^*}^C, {f^*}^C \rangle$, where ${x_e^*}^C= x^*_e$ for all $e \in E(C)$ and ${f_p^*}^C=f^*_p$ for all $p \in \mathcal{P}_{u,v}$, is a feasible solution to CP$(C)$.
By definition of $N(C)$ we have that for any $(u,v) \in N(C)$ all paths in $\mathcal{P}_{u,v}$ also appear in $G(C)$, and therefore $\langle {x^*}^C, {f^*}^C \rangle$ satisfies both capacity and flow constraints of CP$(C)$ for pairs $(u,v) \in E(C)$ since they were satisfied in CP$(G)$. Since we assumed that $\langle \tilde{x}^C, \tilde{f}^C \rangle$ is an optimal solution of CP($C$), this implies that $g(\tilde{x}^C) \leq g({x^*}^C)$.
\end{proof}

We now provide in Algorithm~\ref{alg:distributed_generalLP} a distributed algorithm for solving CP$(G)$, by having cluster centers solve CP$(C)$ of their cluster using a sequential algorithm in each iteration, and then averaging over the solutions for each edge.  We assume that all nodes know the values of $D$ and $\epsilon$. Let $C_{u,i}$ denote the cluster that node $u$ belongs to in the $i$-th iteration, and let $\langle x^{C_{u,i},i}, f^{C_{u,i},i} \rangle$ be the fractional CP solution of $C_{u,i}$, where $\langle x_e^{C_{u,i},i}, f_p^{C_{u,i},i} \rangle$ is the fractional CP value for $e=(u,v)$, and $p \in \mathcal{P}_{u,v}$.

Since the objective is a function of edge vectors, what we mean by having a distributed solution to a distance bounded network design CP is that each node $u$ will know the value $x_e$ for all the edges $e$ incident to $u$. It is not hard to see that the algorithm could be modified so that every node $u$ can also know the flow value $f_p$ for each path $p$.
\begin{algorithm}[h]
\caption{Distributed algorithm for approximating distance bounded network design CPs.}
\label{alg:distributed_generalLP}
  Set $\lambda= \frac{\epsilon (1 - \epsilon)}{(2-\epsilon)(1+ \epsilon)}$ and $t = \left\lceil  \frac{ 16 (1- \frac{\epsilon}{2}) (1+\epsilon)\ln n}{\epsilon^2} \right\rceil$.\\
  Sample from $(D, \lambda)$-padded decompositions $t$ times by Lemma \ref{lem:decomp-distributed}, and let $P_i$ be the partition obtained in the $i$'th run.\\
  For each cluster $C \in P_i$, the center of cluster $C$ computes $G(C)$ (see Lemma \ref{lem:cluster_centers}).\\
  The center of each cluster $C \in P_i$ solves $CP(C)$
  and sends the solution $\langle x^{C,i}, f^{C,i} \rangle$ to all nodes $u \in C$.

\For{$e=(u,v)\in E$}{
Let $I_{u,v}=\{i \mid \exists C \in P_i: u,v \in C \}$.\\
\tcp{these are the iterations in which both endpoints are in same cluster}
$\tilde x_e \leftarrow \min(1, \frac{1+\epsilon}{t}\sum_{i \in  I_{u,v}} x^{C_{u,i},i}_e)$.

}
\end{algorithm}

\begin{theorem} \label{thm:distributed_LP}
Algorithm \ref{alg:distributed_generalLP} takes $O((D/\epsilon)\log n)$ rounds to terminate, and it will compute a solution of cost $(1+\epsilon) CP^*$ to a bounded distance network design CP (Definition \ref{general-LP}) with high probability, where $CP^*$ is the optimal solution and $0 < \epsilon \leq 1$. Moreover, if the convex program can be solved sequentially to arbitrary precision in polynomial time, then all of the node computations are also polynomial time.
\end{theorem}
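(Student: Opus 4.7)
The plan is to prove the theorem by separately verifying (i) the round complexity, (ii) the local-computation claim, (iii) feasibility of the output as a CP solution, and (iv) the $(1{+}\epsilon)$ cost bound. The round complexity is the easy piece: all $t$ padded decompositions can be sampled in parallel in $O((D/\lambda)\log n) = O((D/\epsilon)\log n)$ rounds by Lemma~\ref{lem:decomp-distributed}, and in the same amount of time each cluster center collects $G(C)$ (Lemma~\ref{lem:cluster_centers}), solves $\mathrm{CP}(C)$ locally, and broadcasts the solution back to its cluster. The final averaging is purely local, so the total is $O((D/\epsilon)\log n)$ rounds. For the polynomial-time claim, I would note that $\mathrm{CP}(C)$ has exactly the same form as $\mathrm{CP}(G)$ but on the induced subgraph $G(C)$ with the demand set $N(C) \subseteq \mathcal S$; any polynomial-time solver for the global CP therefore applies to $\mathrm{CP}(C)$ as well.

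For feasibility I would implicitly define $\tilde f_P = \frac{1+\epsilon}{t}\sum_{i:(u,v)\in N(C_{u,i})} f_P^{C_{u,i},i}$ and verify both constraint types. The capacity constraints are preserved by linearity: summing and scaling the capacity inequalities from each local CP (and using that any path $P \in \mathcal P_{u,v}$ with $e = (a,b) \in P$ appearing in $C_{u,i}$ forces $a,b \in C_{u,i}$, so $i \in I_{a,b}$) gives $\sum_{P:e\in P}\tilde f_P \leq \frac{1+\epsilon}{t}\sum_{i\in I_{a,b}} x^{C_{a,i},i}_e$. The flow constraint is the probabilistic part: call iteration $i$ \emph{good} for $(u,v)$ if $B(u,D) \subseteq C_{u,i}$, in which case $(u,v)\in N(C_{u,i})$ and $\sum_P f_P^{C_{u,i},i} \geq 1$. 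Each iteration is good with probability at least $1-\lambda = \frac{2}{(2-\epsilon)(1+\epsilon)}$ by the padded-decomposition property. A standard Chernoff bound combined with the algorithm's choice $t = \Theta(\log n/\epsilon^2)$ shows that with probability at least $1 - n^{-\Omega(1)}$ the number of good iterations is at least $t/(1+\epsilon)$; the relation between $1-\lambda$ and $1/(1+\epsilon)$ was engineered precisely so that the Chernoff slack $\epsilon/2$ suffices. Union-bounding over the (at most $n^2$) demands and scaling by $\frac{1+\epsilon}{t}$ yields $\sum_P \tilde f_P \geq 1$ for every demand w.h.p.

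The cost bound is the most delicate step and is where the convex-partitionable assumption is essential. Fix an iteration $i$ and let $x^i$ be the vector formed by placing $x^{C,i}$ on $E(C)$ for each $C \in P_i$ and $0$ on cross-cluster edges. Definition~\ref{def_partition_func} then gives $g(x^i) = h_{P_i}\bigl(g(x^{C,i}) : C \in P_i\bigr)$, and applying Lemma~\ref{lem:cluster_upper} inside the non-decreasing $h_{P_i}$ yields $g(x^i) \leq h_{P_i}\bigl(g((x^*)^C) : C \in P_i\bigr) = g((x^*)^{P_i}) \leq g(x^*) = \mathrm{CP}^*$, the last step using monotonicity of $g$. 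Since $\tilde x_e \leq \frac{1+\epsilon}{t}\sum_i x^i_e$ coordinatewise, monotonicity gives $g(\tilde x) \leq g\bigl(\tfrac{1+\epsilon}{t}\sum_i x^i\bigr)$, and then Jensen's inequality together with the positive homogeneity satisfied by all objective functions in our class ($p$-norms, linear functions, and the max-degree function) yields $g(\tilde x) \leq \frac{1+\epsilon}{t}\sum_i g(x^i) \leq (1+\epsilon)\mathrm{CP}^*$. The hard part of the proof is precisely this interplay: one must choose $\lambda$ small enough that the partition almost always respects the $B(u,D)$ balls (for feasibility), yet large enough that the Chernoff concentration is tight enough with only $O(\log n/\epsilon^2)$ repetitions, and simultaneously large enough that the $(1+\epsilon)$ scaling factor needed to recover feasibility does not spill over into the cost bound. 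The algorithm's specific $\lambda = \epsilon(1-\epsilon)/((2-\epsilon)(1+\epsilon))$ is exactly the value that makes $(1-\lambda)(1-\epsilon/2) \geq 1/(1+\epsilon)$ hold with equality up to $O(\epsilon^2)$, closing all three constraints at once.
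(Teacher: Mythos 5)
Your proposal follows essentially the same route as the paper's proof of Theorem~\ref{thm:distributed_LP}: sample $t=\Theta(\log n/\epsilon^2)$ padded decompositions in parallel, have cluster centers solve the local programs $CP(C)$ (Lemmas~\ref{lem:decomp-distributed} and~\ref{lem:cluster_centers}), average the local solutions, use the padding probability $1-\lambda$ plus a Chernoff and union bound to show every demand has at least $t/(1+\epsilon)$ good iterations, and combine convex partitionability (Definition~\ref{def_partition_func}), Lemma~\ref{lem:cluster_upper}, and monotonicity of $h_{P_i}$ and $g$ for the cost bound. Your reshuffling of which constraint family carries the probabilistic burden---you scale the flows by $\tfrac{1+\epsilon}{t}$ so that the flow constraints need the Chernoff bound and the capacity constraints follow deterministically, whereas the paper averages the flows by $1/|I_u|$ so that the flow constraints are deterministic and the capacity constraints need $|I_u|>t/(1+\epsilon)$---is an equivalent presentation of the same concentration argument, and your round-complexity and polynomial-time observations match the paper's.

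The one substantive deviation is in the final scaling step of the cost bound, where you invoke positive homogeneity of $g$, ``satisfied by all objective functions in our class.'' Theorem~\ref{thm:distributed_LP} is stated for every convex partitionable $g$ with $g(\vec 0)=0$, not only positively homogeneous ones, so as written your argument establishes a narrower statement (sufficient for the applications in Section~\ref{sec:applications}, but not the theorem as claimed). The extra assumption is avoidable: since $0<\tfrac{1+\epsilon}{t}\le 1$, convexity together with $g(\vec 0)=0$ already yields $g\bigl(\tfrac{1+\epsilon}{t}\,y\bigr)\le\tfrac{1+\epsilon}{t}\,g(y)$, which is exactly how the paper pulls the factor out (via Jensen's inequality), and the paper splits the resulting sum over iterations by convexity as well rather than by homogeneity. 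Replacing your homogeneity appeal with that Jensen step brings your proof in line with the paper's level of generality; everything else, including the role of the specific choice of $\lambda$ in making $(1-\lambda)(1-\epsilon/2)$ match $1/(1+\epsilon)$, is the same as in the paper.
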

\begin{proof}
\textbf{Correctness:} We first show that with high probability the values $\tilde{x}_e, e \in E$ form a feasible solution. Here we only need to show that a feasible solution for the flow values exist, and do not require nodes to compute these values.
 Let $I_u = \{i: \exists C \in P_i, B(u,D) \subseteq C \}$, i.e.~$I_u$ is the set of iterations in which $B(u,D)$ is contained in a cluster, and let $I_{u,v}$ be the set of iterations in which both $u$ and $v$ are in the same cluster. Since we need to implement Algorithm \ref{alg:distributed_generalLP} in a distributed manner, we use $I_{u,v}$ in our implementation, while the analysis is based on $I_u$. We can do so since by definition we have $I_u \subseteq I_{u,v}$, for any $(u,v) \in E$. 
 
 For any $p \in \mathcal{P}_{u,v}$, we set the flow values to be $\tilde{f}_p = \frac{1}{|I_u|} \sum_{i \in I_u} f_p^{C_{u,i},i}$. We will show that this gives a feasible flow. First we argue that enough flow is being sent. For all $(u,v) \in \mathcal{S}$, we have,
 \begin{align*}
  \sum_{p \in \mathcal{P}_{u,v}} \tilde{f}_p &=\sum_{p \in \mathcal{P}_{u,v}} \frac{1}{|I_u|} \sum_{i \in I_u} f_p^{C_{u,i},i} =  \frac{1}{|I_u|} \sum_{i \in I_u} \sum_{p \in \mathcal{P}_{u,v}} f_p^{C_{u,i},i} \geq  \frac{1}{|I_u|} \sum_{i \in I_u}  1 \geq 1.
  \end{align*}
 We have used the fact that for each $i \in I_u$ the solution corresponding to the CP of the cluster containing $u$ satisfies the constraint that $\sum_{p \in \mathcal{P}_{u,v}: e\in p} f_p^{C_{u,i},i} \geq 1$, because for each such $i$ we know that $(u,v) \in N(C)$.
 
   Next, we will argue that the capacity constraints are also satisfied. The second property of $(D, \lambda)$-padded decompositions implies that ${\Pr(i \in I_u) \geq 1- \lambda =1-\frac{\epsilon (1 - \epsilon)}{(2-\epsilon)(1+ \epsilon)}= \frac{1}{(1-\frac{\epsilon}{2}) (1+\epsilon)}}$ for each iteration $1 \leq i \leq t$. By linearity of expectations we have $E[|I_u|] \geq t (1-\lambda)$. Since each sampling is performed independently, by Chernoff bound for $\delta=\epsilon/2$, we get, 
\begin{align*}
  \Pr(|I_u| \leq t(1-\lambda)(1-\delta))&= \Pr \left(|I_u| \leq \frac{t (1-\frac{\epsilon}{2})}{(1-\frac{\epsilon}{2}) (1+\epsilon)}\right)\\
  &= \Pr\left(|I_u| \leq \frac{t}{(1+\epsilon)}\right) \leq e^{- \frac{(\epsilon/2)^2(1-\lambda) t}{2}} \leq e^{-2\ln n} = \frac{1}{n^2}.
\end{align*}     
     Hence by a union bound on all nodes we have that with high probability ${|I_u| > t/(1+\epsilon)}$. Therefore, for all $(u,v) \in \mathcal{S}, e \in E$, we have (w.h.p.),
    \begin{align*}
  \sum_{p \in \mathcal{P}_{u,v}: e\in p} \tilde{f}_p &=\sum_{p \in \mathcal{P}_{u,v}: e\in p} \frac{1}{|I_u|} \sum_{i \in I_u} f_p^{C_{u,i},i} =  \frac{1}{|I_u|} \sum_{i \in I_u} \sum_{p \in \mathcal{P}_{u,v}: e\in p} f_p^{C_{u,i},i} \leq  \frac{1}{|I_u|} \sum_{i \in I_u}  x_{e}^{C_{u,i},i} \\ 
   &\leq \min\left(1, \frac{1}{|I_u|} \sum_{i \in I_{u,v}}  x_{e}^{C_{u,i},i}\right)  \leq \min\left(1, \frac{1+\epsilon}{t} \sum_{i \in I_{u,v}}  x_{e}^{C_{u,i},i}\right)= \tilde{x}_e.
  \end{align*}

\textbf{Upper bound:} We will now show that the upper bound holds. Let $\langle x^*, f^* \rangle$ be an optimal solution to CP($G$). We have ${\tilde{x}_e= \min(1, \frac{1+\epsilon}{t} \sum_{i \in I_e} {x}_e^{C_{u,i},i})}$, and for each $e = (u,v)$ and $1 \leq i \leq t$, we set $\tilde{x}^i_e= {x}_e^{C_{u,i},i}$ if $i \in I_e$, and $\tilde{x}^i_e = 0$ otherwise.   
Note that $0<(1+\epsilon)/t <1$, and since $g$ is a convex function and $g(\vec{0})=0$, by Jensen's inequality we have $g \left(\frac{1+ \epsilon}{t}x \right) \leq \frac{1+ \epsilon}{t} g(x)$. Then for $\tilde{x}=(\tilde{x}_e)_{e \in E}$ we can write: 
\begin{align*}
g({\tilde{x}}) &= g \left(\left(\tilde{x}_e\right)_{e \in E}\right) \leq g\left(\frac{1+\epsilon}{t}\left(\sum_{i \in I_e} x_e^{C_{u,i},i}\right)_{e \in E}\right)\leq \frac{1+\epsilon}{t} g \left(\left(\sum_{i \in I_e} x_e^{C_{u,i},i}\right)_{e \in E}\right)\\
				 &\leq \frac{1+\epsilon}{t}  g \left(\left(\sum^t_{i=1}\tilde{x}^i_e \right)_{e \in E}\right)\leq \frac{1+\epsilon}{t} g\left(\sum_{i=1}^t \left(\tilde{x}^i_e\right)_{e \in E}\right) \leq \frac{1+\epsilon}{t} \sum_{i=1}^t g \left(\left(\tilde{x}^i_e \right)_{e \in E}\right).
\end{align*}
In the final inequality, since $g$ is convex, we used Jensen's inequality to take the sum out of the function. It is now enough to show that in each iteration $i$, it holds $g((\tilde{x}^i_e)_{e \in E})\leq g({x^*})$.  Let $\tilde{x}^i =( (\tilde{x}_e^i)_{e \in E})$, and let $P_i=\{ C_1, C_2, ..., C_\ell \}$ be the partition of $V$. Since $g$ is a convex partitionable function w.r.t.~$G$, there exists a nondecreasing and convex function $h: \mathbb{R}^m \mapsto \mathbb{R}$ for which we can write $g({\tilde{x}}^i)= h_{P_i}(g(\tilde{x}^{i,C_1}), g(\tilde{x}^{i,C_2}), ..., g(\tilde{x}^{i, C_\ell}))$, since $\tilde x^i_e = 0$ by definition for edges which go between clusters (for simplicity we are denoting $\tilde{x}^{i^{C_j}}$ by $\tilde{x}^{i,C_j}$). 

Recall that ${x^*}^C$ is the vector in which ${x^*}^C_e= x^*_e$ for all $e \in E(C)$ and ${x^*}^C_e=0$ otherwise. By Lemma \ref{lem:cluster_upper} we get that for all $C \in P_i$, $g(\tilde{x}^{i,C}) \leq g({x^*}^{C})$. Now we consider a vector $\hat{x}$, defined by setting $\hat{x}_e= x^*_e$ for all edge $e$ with both endpoints in the same cluster, and $\hat{x}_e=0$ otherwise. Since we assumed $h_{P_i}$ to be nondecreasing, we get,
\begin{align*}
g(\tilde{x}^i)&= h_{P_i}(g(\tilde{x}^{i,C_1}), g(\tilde{x}^{i,C_2}), ..., g(\tilde{x}^{i, C\ell}))\leq h_{P_i}(g({x^*}^{C_1}), g({x^*}^{C_2}), ..., g({x^*}^{C_\ell}))\\ 
&=h_{P_i}(g(\hat{x}^{C_1}, \hat{x}^{C_2}, ..., g(\hat{x}^{C_\ell})) =g(\hat{x}) \leq g(x^*).
\end{align*}		
For the last inequality we have used the fact that $g$ is non-decreasing, and that for all $e \in E$, $\hat{x}_e \leq x^*_e$ (since either $\hat{x}_e=x^*_e$ or $\hat{x}_e=0$). By plugging this into the above inequalities, we will get $g(\tilde{x}) \leq \frac{1+\epsilon}{t} \sum_{i=1}^t g({\tilde{x}}^i) \leq (1+\epsilon) g({x^*})$, which implies the claim that Algorithm \ref{alg:distributed_generalLP} gives a $(1+\epsilon)$-approximation to the optimal solution. 

\textbf{Time Complexity:} The decomposition step and sending the information within a cluster takes $ O( (D/\epsilon)\log n)$ rounds since the diameter of each cluster is $O( (D/ \lambda)\log n) = O( (D/\epsilon) \log n)$. Since each decomposition is independent, we can do all of them in parallel, so steps 1-4 of the algorithm only take $O( (D/\epsilon) \log n)$ rounds in total.  Clearly the rest of the algorithm can be done in a constant number of rounds. Hence in total w.h.p.~the algorithm will take $O((D/\epsilon) \log n)$ rounds.
\end{proof}

\section{Distributed Approximation Algorithms for Network Design} \label{sec:applications}
In this section, we will focus on several network design problems which can be approximated by first solving a convex relaxation using Algorithm \ref{alg:distributed_generalLP} and then locally rounding the solution. For that purpose, we will describe how each problem has a distance bounded network design CP relaxation (Definition \ref{general-LP}), and will then show that existing rounding schemes are local.  \iflipics All missing proofs can be found in Appendix~\ref{app:applications}. \fi

\subsection{\DIRECTED} \label{sec:spanner}
Dinitz and Krauthgamer~\cite{DK-STOC} introduced a linear programming relaxation for \DIRECTED which is just a distance-bounded network design CP with demands pairs $\mathcal{S} = E$, allowed paths $\mathcal P_{u,v}$ which are the directed paths from $u$ to $v$ of length at most $k$, and objective function $g(x) = \sum_{e \in E} x_e$. They showed that this LP can be solved in polynomial time (approximately if $k$ is non-constant).  
%
We will denote this LP by $LP(G)$. Clearly, LP$(G)$ is a distance bounded network design CP with $D=k$. Hence, Theorem \ref{thm:distributed_LP} implies that we can use Algorithm \ref{alg:distributed_generalLP} to approximately solve this LP in $O(k \log n)$ rounds in the \LOCAL model. 

We now provide in Algorithm~\ref{alg:kspanner} a distributed rounding scheme that gives an $O(n^{1/2} \log n)$-approximation for \DIRECTED. This algorithm matches the best centralized approximation ratio known~\cite{Berman2011}, and is just the obvious distributed version of the algorithm proposed in~\cite{Berman2011}. The difference is that here we truncate the shortest-path trees at depth $k$ (as opposed to full shortest-path trees), and nodes choose whether to become a tree root independently (rather than chosen without replacement as in~\cite{Berman2011}.  

\begin{algorithm}[h]
\caption{Distributed rounding algorithm for $k$-spanner.}
\label{alg:kspanner}
\SetKwInOut{Input}{Input}
\Input{Graph $G=(V,E)$, fractional solution $\langle x,f\rangle$ to LP$(G)$.}
$E'= \emptyset, \forall v \in V: T_v^{in} = \emptyset, T_v^{out}=\emptyset$.\\
\For{$e \in E$}{
Add $e$ to $E'$ with probability $\min(n^{1/2} \cdot \ln n \cdot x_e, 1)$.
}
\For{$v \in V$}{ \tcp{Random tree sampling}
Choose $p$ uniformly at random from $[0,1]$.\\
\If{$p < \frac{3 \ln n}{\sqrt{n}}$}{
 $T_v^{in} \leftarrow$ shortest path in-arborescence rooted at $v$ truncated at depth $k$. \\
 $T_v^{out} \leftarrow$ shortest path out-arborescence rooted at $v$ truncated at depth $k$.\\
   }
}
Output $E' \cup (\cup_{v \in V} (T_v^{in} \cup T_v^{out}))$. \tcp{Each node knows its portion of the output.}
\end{algorithm}

The following lemma is essentially from~\cite{Berman2011}, with the proof requiring only slight technical changes due to the slightly different algorithms.  We sketch it for completeness\iflipics in Appendix~\ref{app:applications}\fi .

\begin{lemma} \label{lem:directed-round}
Given a directed graph $G$, LP$(G)$ as defined, and a fractional solution $LP^*$ to $LP(G)$, the output of Algorithm \ref{alg:kspanner} has size $O(n^{1/2} \cdot (n +LP^*)\log n )$.
\end{lemma}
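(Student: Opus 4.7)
The output of the algorithm decomposes into two independent pieces: the random edge set $E'$ from the first stage, and the union of truncated in/out shortest-path arborescences from the second stage. The plan is to bound each piece separately (in expectation, then concentrate via Chernoff), and combine by a union bound.

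First, for $|E'|$, by linearity of expectation,
\[
\mathbb{E}[|E'|] \;=\; \sum_{e \in E} \min(n^{1/2} \ln n \cdot x_e, 1) \;\leq\; n^{1/2} \ln n \cdot \sum_{e \in E} x_e \;=\; n^{1/2} \ln n \cdot LP^*.
\]
Since the edges are chosen independently, a standard Chernoff bound gives $|E'| = O(n^{1/2} \log n \cdot LP^* + \log n)$ with high probability, which is absorbed into the target bound (we can assume $LP^* \geq 1$ else the problem is trivial, or simply bound additively against the tree piece).

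Second, for the tree-sampling stage, let $R \subseteq V$ be the set of sampled roots. Then $\mathbb{E}[|R|] = 3\sqrt{n} \ln n$, and again by Chernoff, $|R| = O(\sqrt{n} \log n)$ with high probability. Each truncated in-arborescence $T_v^{in}$ and out-arborescence $T_v^{out}$ is a tree over a subset of $V$, so has at most $n-1$ edges. Therefore, conditioned on the high-probability event that $|R| = O(\sqrt{n}\log n)$, we have
\[
\Big|\bigcup_{v \in R} (T_v^{in} \cup T_v^{out})\Big| \;\leq\; 2(n-1)\cdot |R| \;=\; O(n^{3/2} \log n).
\]

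Finally, a union bound over the two high-probability events, together with the inequality
\[
O(n^{1/2} \log n \cdot LP^*) + O(n^{3/2} \log n) \;=\; O\!\left(n^{1/2} \log n \cdot (n + LP^*)\right),
\]
yields the claimed size bound. The main subtlety (if any) is that we want the bound to hold with high probability rather than just in expectation, which requires the Chernoff applications above; but since the two random components are independent Bernoulli sums, these are routine, so I do not expect any genuine obstacle. The argument deviates from \cite{Berman2011} only in that trees are truncated at depth $k$ (which only shrinks them, so the worst-case $n-1$ bound still applies) and roots are chosen independently (so Chernoff replaces a deterministic cardinality bound).
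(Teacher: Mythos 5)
Your proof is correct and follows essentially the same route as the paper's: the paper likewise bounds the tree stage by noting that with high probability only $O(\sqrt{n}\log n)$ roots are sampled, each contributing $O(n)$ arborescence edges, and bounds $|E'|$ by $O(n^{1/2}\log n \cdot LP^*)$, combining into $O(n^{1/2}(n+LP^*)\log n)$. The only difference is that the paper's sketch also folds in the stretch/correctness argument (thin edges via $E'$, thick edges via sampled arborescences), which is not needed for the size claim you were asked to prove.
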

\iflipics \else
\begin{proof}
  Let $N_{s,t}$ be the subgraph of $G$ induced by the nodes on paths in $\mathcal P_{s,t}$. Edge $e \in E$ is called a \emph{thick} edge if $|N_{s,t}| \geq n^{1/2}$, and otherwise it is called a \emph{thin} edge. The set $E'$ in Algorithm \ref{alg:kspanner} satisfies the spanner property for all thin edges (as argued in \cite{Berman2011}), and the random tree sampling phase satisfies the spanner property for the thick edges. Each thick edge $(s,t)$ is spanned if at least one node in $N_{s,t}$ performs the random tree sampling. This probability is at least $1-(1- \frac{3\ln n}{n^{1/2}})^{n^{1/2}} \geq 1 - 1/{n^3}$. Then a union bound on all the edges (of size at most $O(n^2)$) implies that w.h.p.~all thick edges are spanned.
We now argue that the output is an $O(n^{1/2} \log n)$-approximation algorithm: at most $O(n^{1/2} \log n)$ arborescences are chosen with high probability (each arborscence has $O(n)$ edges), and we argued that $|E'|= O(n^{1/2} \log n \cdot LP^*)$. Hence, the overall size of the output is $O(n^{1/2} \log n \cdot (n+ LP^*))$. 
\end{proof}
\fi

It is also easy to see that this algorithm can be implemented in the $\mathcal{LOCAL}$ model\iflipics (see Appendix~\ref{app:applications})\fi .

\begin{lemma}\label{lem:distributed_rounding}
Algorithm \ref{alg:kspanner} runs in $O(k)$ time in the $\mathcal{LOCAL}$ model.
\end{lemma}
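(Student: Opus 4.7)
The plan is to verify that each step of Algorithm~\ref{alg:kspanner} can be carried out within $O(k)$ rounds in the $\mathcal{LOCAL}$ model, exploiting the standard fact that after $k$ rounds of communication (with unbounded message sizes) every node has learned its complete $k$-neighborhood, and can therefore locally compute any function of that neighborhood.

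First I would dispatch the random edge-inclusion step (adding $e$ to $E'$ with probability $\min(n^{1/2}\ln n\cdot x_e, 1)$). Each edge $e=(u,v)$ is oriented by IDs, and the lower-ID endpoint flips the coin (the value $x_e$ is already known to both endpoints from the preceding convex-program solution) and sends the outcome to the other endpoint in a single round. Likewise, the independent choices $p$ used to decide which nodes become tree roots are purely local and require no communication. Thus these two steps contribute $O(1)$ rounds.

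The nontrivial piece is the construction of the truncated arborescences $T_v^{\text{in}}$ and $T_v^{\text{out}}$ for every chosen root $v$, together with ensuring that each node ends up knowing which of its incident edges lie in the final output. I would run the generic $\mathcal{LOCAL}$ neighborhood-collection protocol for exactly $k$ rounds: each node broadcasts, in each round, everything it currently knows about the graph topology and the root-indicator bits to all its neighbors. After $k$ rounds, each node $u$ knows the full subgraph induced on $B(u,k)$, together with which nodes in that ball chose to be roots. For every root $v$ with $d(u,v)\le k$, node $u$ can then locally reconstruct the truncated shortest-path in- and out-arborescences using a deterministic tie-breaking rule (for instance, lexicographically smallest path, with ties broken by node ID), and decide for each of its incident edges whether it is in $T_v^{\text{in}}\cup T_v^{\text{out}}$.

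The main subtlety to check is consistency: for an edge $e=(u,v)$, the endpoints $u$ and $v$ must agree on whether $e$ is included in any given tree $T_w^{\text{in}}$ or $T_w^{\text{out}}$. This is immediate from the fact that $e$'s endpoints lie within distance $k$ of the same roots, so they share identical topological information on the relevant $k$-ball around each such root $w$; applied to the same deterministic tie-breaking rule, their local computations produce the same arborescence, and hence the same decision on $e$. Combining the $O(1)$ rounds for edge sampling and root selection with the $k$ rounds needed to gather $k$-neighborhoods yields a total of $O(k)$ rounds, establishing the lemma.
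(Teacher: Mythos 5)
Your overall plan (local coin flips for the edge-sampling step, local root selection, and then $O(k)$ rounds of communication to handle the truncated arborescences) is sound, and the edge-sampling part is exactly what the paper does: the lower-ID endpoint flips the coin and exchanges the outcome with its neighbor in one round. However, the consistency step --- which you correctly identify as the main subtlety --- is not justified as written. After $k$ rounds of neighborhood collection, node $u$ knows the subgraph induced on $B(u,k)$, \emph{not} the ball $B(w,k)$ around a root $w$: if $d(u,w)$ is close to $k$, then $B(w,k)$ contains vertices at distance up to $2k$ from $u$, so the two endpoints of an edge $e=(u,v)$ do \emph{not} in general ``share identical topological information on the relevant $k$-ball around $w$.'' Concretely, $u$'s view may assign a candidate parent $z$ of $v$ a larger distance from $w$ than its true one (because the shortest $w$--$z$ path leaves $B(u,k)$), so the locally reconstructed arborescences at $u$ and $v$ need not coincide, and the endpoints could disagree on whether $e$ is included. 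One can in fact salvage radius-$k$ gathering by a careful triangle-inequality argument showing that, for the distances actually relevant to the parent choice of $e$'s child endpoint, the witnessing shortest paths stay inside $B(u,k)$ --- but that argument is not in your write-up, and the reason you give is false as stated.

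The gap is easily repaired within the claimed bound: either gather to radius $2k$ (still $O(k)$ rounds), after which every node genuinely knows $B(w,k)$ for every root $w$ within distance $k$ and your deterministic tie-breaking argument goes through verbatim, or follow the paper's simpler route, in which each sampled root $v$ itself builds $T_v^{\mathrm{in}}$ and $T_v^{\mathrm{out}}$ by a distributed shortest-path (BFS-type) tree construction that tracks the depth and stops at depth $k$; all roots run this in parallel (messages are unbounded in the $\mathcal{LOCAL}$ model), each node learns its parent in each tree directly from the protocol, and no local reconstruction or cross-endpoint consistency argument is needed. With either repair your proof yields the stated $O(k)$ bound.
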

\iflipics \else
\begin{proof}
Each node $v$ in $G$ has received the fractional solutions $x_e$ corresponding to all edges $e \in E$ incident to $v$. The randomized rounding step can be performed locally: the node with the smaller ID flips a coin, and exchanges the coin flip result with its corresponding neighbors. In order to form $T_i^{in}$ and $T_i^{out}$, $v$ performs a distributed BFS algorithms by forming a shortest path tree while keeping track of the distance from $v$. When the distance counter reaches $k$, the tree construction terminates. 
\end{proof}
\fi

We now immediately get our main result for \DIRECTED.

\begin{corollary}
Algorithm \ref{alg:distributed_generalLP} with $D=k$ along with the rounding scheme in Algorithm \ref{alg:kspanner} yields an $O(n^{1/2} \ln n)$-approximation w.h.p.~to \DIRECTED that runs in $O(k \log n)$ time in the $\mathcal{LOCAL}$ model and uses only polynomial-time computations at each node.
\end{corollary}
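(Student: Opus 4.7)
The plan is to compose the three ingredients already proved in the section: the distributed LP solver (Theorem~\ref{thm:distributed_LP}), the rounding size bound (Lemma~\ref{lem:directed-round}), and the fact that the rounding itself is implementable in $O(k)$ rounds (Lemma~\ref{lem:distributed_rounding}).

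First I would verify that LP$(G)$ fits the format of a distance-bounded network design convex program with $D = k$: the demand set is $\mathcal{S} = E$, for each $(u,v) \in \mathcal{S}$ the set $\mathcal{P}_{u,v}$ consists of all directed $u\to v$ paths of length at most $k$, and the objective $g(x) = \sum_e x_e$ is convex partitionable (it is a $1$-norm, as already noted in Section~\ref{sec:program-def}). Instantiating Theorem~\ref{thm:distributed_LP} with this CP and any constant $\epsilon$ (say $\epsilon = 1/2$) then produces, in $O(k\log n)$ rounds and using polynomial-time computation per node, a fractional edge vector $\tilde x$ with $\sum_e \tilde x_e \le (1+\epsilon)\,LP^* \le 2\,OPT$, where $OPT$ is the optimum integral solution and the last inequality uses that LP$(G)$ is a relaxation.

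Next I would hand $\tilde x$ to Algorithm~\ref{alg:kspanner}. Lemma~\ref{lem:directed-round} guarantees that with high probability the output is a valid $k$-spanner of size $O(n^{1/2}(n + \sum_e \tilde x_e)\log n) = O(n^{1/2}(n + OPT)\log n)$, while Lemma~\ref{lem:distributed_rounding} adds only $O(k)$ rounds and trivially polynomial local computation. To convert this size bound into an $O(\sqrt n\,\log n)$-approximation, I would invoke the standard observation that $OPT \ge n/2$ on any input without isolated vertices (and isolated vertices carry no demand so can be ignored): any feasible $k$-spanner must keep at least one edge incident to every non-isolated vertex, because whenever a missing edge $(u,v)$ is covered by an alternative path of length $\le k$, that path uses edges incident to both $u$ and $v$. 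Hence $n + OPT = O(OPT)$, and the overall approximation ratio is $O(\sqrt n\,\log n)$.

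Summing the round complexities yields $O(k\log n) + O(k) = O(k\log n)$, polynomial-time local computation is inherited from both steps, and the claimed $\tilde O(\sqrt n)$ approximation holds with high probability. There is no substantive technical obstacle here; the corollary is essentially a bookkeeping composition on top of Theorem~\ref{thm:distributed_LP} and Lemmas~\ref{lem:directed-round}--\ref{lem:distributed_rounding}, with the only point requiring minor care being the $OPT \ge \Omega(n)$ lower bound that absorbs the additive $n$ term in the size estimate.
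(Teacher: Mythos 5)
Your proposal is correct and follows essentially the same route as the paper: solve LP$(G)$ with $\epsilon=1/2$ via Theorem~\ref{thm:distributed_LP}, round with Algorithm~\ref{alg:kspanner} using Lemmas~\ref{lem:directed-round} and~\ref{lem:distributed_rounding}, and absorb the additive $n$ term via the observation that any $k$-spanner has $\Omega(n)$ edges. The only difference is that you spell out the $OPT \ge n/2$ bound explicitly, which the paper simply asserts.
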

\begin{proof}
We first run Algorithm \ref{alg:distributed_generalLP} to solve LP$(G)$ up to a constant factor (by setting $\epsilon = 1/2$), which takes time $O(k \log n)$ with high probability (Theorem \ref{thm:distributed_LP}). Since each cluster center can solve the local LP in polynomial time, all computations are polynomial time.  We then use Algorithm \ref{alg:kspanner} to round the fractional solutions of LP$(G)$, which takes $O(k)$ time. Since the size of a $k$-spanner is at least $\Omega(n)$, Algorithm \ref{alg:kspanner} then outputs an $O(n^{1/2} \ln n)$-approximation to the minimum (Lemma \ref{lem:directed-round}).
\end{proof}

\subsection{\BASIC[3] and \BASIC[4]}
If the input graph is undirected then stronger approximations are possible.  In particular, for stretch $3$ and $4$, there are $\tilde O(n^{1/3})$-approximations due to~\cite{Berman2011} (for stretch $3$) and \cite{DZ16} (for stretch $4$).  Without going into details, both of these algorithms use the same LP relaxation as in \DIRECTED, but round the LP differently.  So in order to give distributed versions of these algorithms, we only need to modify Algorithm~\ref{alg:kspanner} to use the appropriate rounding algorithm (and change some of the other parameters in the shortest-path arborescence sampling).  Fortunately, both of these algorithms use rounding schemes which are highly local.  Informally, rather than sample each edge independently with probability proportional to the (inflated) fractional value as in Algorithm~\ref{alg:kspanner}, these algorithms sample a value independently at each \emph{vertex} and then include an edge if a particular function of the values of the two endpoints (different in each of the algorithms) passes some threshold.  Clearly this is a very local rounding algorithm: once we have solved the LP relaxation using Theorem~\ref{thm:distributed_LP}, each node can draw its random value and then spend one more round to exchange a message with each of its neighbors to find out their values, and thus determine which of the edges have been included by the rounding.  Thus the total running time is dominated by the time needed to solve the LP, which in these cases is $O(\log n)$ using Theorem~\ref{thm:distributed_LP}.

\subsection{\DEGREE} \label{sec:LDkS}
We now turn our attention to \DEGREE: Given a graph $G=(V,E)$ and a value $k$, we want to find a $k$-spanner that minimizes the maximum degree. We will use the relaxation and rounding scheme proposed by Chlamt\'a\v{c} and Dinitz~\cite{CD16}. The linear programming relaxation used in \cite{CD16} is very similar to the Directed and Basic $k$-spanner LP relaxation described earlier, with the difference being that a new variable $\lambda$ is added to represent the maximum degree, and so the objective is to minimize $\lambda$ and constraints are added to force $\lambda$ to upper bound the maximum fractional degree.  
\begin{theorem}
Given a graph $G=(V,E)$ (directed or undirected), and any integer $k \geq 1$ there is a distributed algorithm that w.h.p.~computes an $\tilde{O}(\Delta^{(1-1/k)^2})$-approximation to the \DEGREE problem, taking $O(k \log n)$ rounds of the $\mathcal{LOCAL}$ model and using only polynomial-time computations at each node.
\end{theorem}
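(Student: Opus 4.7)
The plan is to mirror the template used for \DIRECTED in Section~\ref{sec:spanner}, replacing the linear objective $\sum_e x_e$ with the maximum (fractional) degree, and then invoking the rounding scheme of Chlamt\'a\v{c}--Dinitz~\cite{CD16}. First, I would cast the \cite{CD16} relaxation as a distance-bounded network design CP in the sense of Definition~\ref{general-LP}. The allowed-path structure ($\mathcal S = E$, $\mathcal P_{u,v}$ the $u\!\to\! v$ paths of length at most $k$) and the flow/capacity constraints are identical to those of LP$(G)$ from Section~\ref{sec:spanner}. What differs is the objective: although \cite{CD16} introduces an auxiliary variable $\lambda$ with constraints $\sum_{e \ni v} x_e \le \lambda$ for each $v$ and minimizes $\lambda$, this is equivalent to minimizing the function $g(x) = \max_{v \in V} \sum_{e \ni v} x_e$, which by Lemma~\ref{lem:max-deg-partition} is convex partitionable with respect to $G$. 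Hence the relaxation fits Definition~\ref{general-LP} with $D = k$.

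Next, I would apply Theorem~\ref{thm:distributed_LP} with $\epsilon = 1/2$ to obtain, in $O(k \log n)$ rounds and using only polynomial-time local computation (the local convex programs are just the \cite{CD16} LP restricted to induced subgraphs, which remain polynomial-size LPs solvable in polynomial time), a fractional solution $\langle \tilde x, \tilde f\rangle$ whose max-degree value is within a constant factor of the LP optimum $\lambda^*$. Crucially, after running Algorithm~\ref{alg:distributed_generalLP}, each node $v$ knows $\tilde x_e$ for every incident $e$ and therefore knows its own fractional degree $\sum_{e \ni v} \tilde x_e$, which is what the rounding will consume.

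For the rounding, I would invoke the scheme from~\cite{CD16}, which proceeds by (i) an independent edge-sampling step where each $e$ is kept with probability proportional to $\tilde x_e$ (times a polylogarithmic blow-up), and (ii) a ``random arborescence'' step analogous to the one in Algorithm~\ref{alg:kspanner}, where each vertex independently decides with some small probability to become the root of a truncated shortest-path arborescence of depth $k$. Both phases are manifestly local: step~(i) requires one round of communication between endpoints of each edge (with a standard tiebreaking rule on IDs), and step~(ii) is a depth-$k$ BFS from each chosen root, which runs in $O(k)$ rounds in the \LOCAL{} model just as in Lemma~\ref{lem:distributed_rounding}. The approximation analysis from~\cite{CD16} is entirely about expected degrees and edge survival probabilities and goes through unchanged, yielding the $\tilde O(\Delta^{(1-1/k)^2})$ ratio with high probability by a union bound over the $n$ vertices and $O(n^2)$ demand pairs.

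The main obstacle I anticipate is verifying that the \cite{CD16} rounding is genuinely local in the sense above, in particular that the arborescence-sampling probabilities depend only on quantities ($\tilde x_e$, fractional degrees, and $\Delta$) that each vertex already knows after Algorithm~\ref{alg:distributed_generalLP} terminates, and that correctness of the rounding does not implicitly require coordination beyond distance $k$. Once this is checked, the time complexity is $O(k \log n) + O(k) = O(k \log n)$, all node computations are polynomial time (since both solving the local LP via \cite{GLS88} and executing the rounding at each node are polynomial), and the approximation ratio matches \cite{CD16}, giving the claimed bound.
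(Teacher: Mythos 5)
Your first half is exactly the paper's argument: drop the auxiliary variable $\lambda$ from the \cite{CD16} relaxation, take $g(x)=\max_{v\in V}\sum_{u:\{v,u\}\in E}x_{\{v,u\}}$ as the objective, invoke Lemma~\ref{lem:max-deg-partition} for convex partitionability, and run Algorithm~\ref{alg:distributed_generalLP} with $\epsilon=1/2$ and $D=k$ to obtain a constant-factor-optimal fractional solution in $O(k\log n)$ rounds with polynomial-time local computation, with every node knowing the values $\tilde x_e$ on its incident edges.

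The gap is in the rounding step. The rounding of \cite{CD16} for \DEGREE is not the two-phase procedure you describe (independent edge sampling with probability proportional to $\tilde x_e$ inflated by polylogarithmic factors, plus randomly rooted truncated shortest-path arborescences); that is the Berman-style rounding for \DIRECTED used in Algorithm~\ref{alg:kspanner}, whose analysis bounds the \emph{number of edges}, not the maximum degree. For the degree objective the arborescence phase is not merely unnecessary but breaks the guarantee: a sampled depth-$k$ shortest-path arborescence can force some vertex to degree $\Theta(\Delta)$ regardless of how small the LP value is, and the lower bound $\mathrm{OPT}\geq \Omega(n)$ that justifies tree sampling in the edge-count setting has no analogue for degrees. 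Likewise, keeping each edge with probability proportional to $\tilde x_e$ does not reproduce the \cite{CD16} analysis; their scheme keeps each edge independently with probability $\tilde x_e^{1/k}$, and the exponent $1/k$ is precisely what lets a fractional path of length at most $k$ survive with constant probability while the expected degrees stay within $\tilde O(\Delta^{(1-1/k)^2})$ of the LP value. So the claim that the approximation analysis ``goes through unchanged'' is unsupported for the rounding as you state it. The repair is simple and dissolves the locality concern you raise: after Algorithm~\ref{alg:distributed_generalLP} each edge's endpoints already know $\tilde x_e$, so the actual \cite{CD16} rounding (include $e$ with probability $\tilde x_e^{1/k}$) is a constant-round, purely local step with no arborescence sampling at all, giving total round complexity $O(k\log n)$ and the claimed ratio.
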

\begin{proof}
It is easy to see that the LP relaxation proposed in \cite{CD16} can be written as a distance bounded network design CP where the objective is $\max_{v \in V}{\deg(v)= \max_{v \in V}\sum_{u:\{v,u\} \in E} x_{\{v,u\}}}$ (we do not need to use their extra variable $\lambda$, since we can instead directly write the objective). Lemma \ref{lem:max-deg-partition} implies that this function is convex partitionable w.r.t.~$G$, and hence the \DEGREE problem can be approximately solved (to within a constant factor) by using Algorithm \ref{alg:distributed_generalLP} with $\epsilon = 1/2$.
Next, we use the following rounding scheme proposed in \cite{CD16}: each edge $e \in E$ is included in the spanner with probability $x_e^{1/k}$. It is clear that this can be done in a constant number of rounds, and hence the overall algorithm takes $O(k \log n)$ rounds (by Theorem \ref{thm:distributed_LP}) in the $\mathcal{LOCAL}$ model. In \cite{CD16}, it was shown that this leads to a $\tilde{O}(\Delta^{(1-1/k)^2})$-approximation solution of the problem.
\end{proof}

\subsection{\DSNDISTANCE} \label{sec:DSN}

It is well-known that the centralized rounding of~\cite{Berman2011} for \DIRECTED is more general than is actually stated in their paper.  In particular, the randomized rounding for ``thin'' edges gives the same guarantee even when each demand has a possibly different distance constraint.  This fact was used, e.g., in~\cite{CDKL17} in their algorithms for \textsc{Distance Preserver}, \textsc{Pairwise $k$-Spanner}, and \DSNDISTANCE.  The difficulty in extending the algorithm of~\cite{Berman2011} is not in the LP rounding, but rather because the arborescence sampling technique used to handle thick edges in~\cite{Berman2011} (and in our Algorithm~\ref{alg:kspanner}) assumes that $n$ is a lower bound on the optimal cost.  This assumption is true for \DIRECTED, but false for variants where there might be a tiny number of demands.  However, it is easy to see that if we assume the demand graph is spanning (i.e., assume that every node is an endpoint of at least one demand) then the optimal solution must have at least $n/2$ edges, and hence we can again just use~\cite{Berman2011} to get a $\tilde O(\sqrt{n})$-approximation for \DSNDISTANCE as long as the demand graph is spanning.  

While this is in the centralized setting, since our algorithm for \DIRECTED is just a lightly modified distributed version of~\cite{Berman2011} (the only difficulty in the distributed setting is solving the LP, which is why that is the main technical contribution of this paper), we can easily modify it to give the same approximation for \DSNDISTANCE with spanning demand graphs.  The only change is that we use $D = \max_{(u,v) \in \mathcal S} L(u,v)$ instead of $k$ when solving the linear programming relaxation (using Theorem~\ref{thm:distributed_LP}) and when truncating the shortest-path arborescences that we sample (note that we have to assume that $D$ is global knowledge, which is reasonable for spanner problems and for \SLSN but may be less reasonable for other special cases of \DSNDISTANCE). This implies Theorem~\ref{thm:DSN-main}, and all of the interesting special cases (\SLSN, \textsc{Distance Preserver}, \textsc{Pairwise $k$-Spanner}, etc.) which it includes.

\iflipics
\else
\section{Conclusion}
In this paper we presented a distributed algorithm for solving distance-bounded network design convex programs in the \LOCAL model of distributed computation.  This is one of the few classes of convex programs (along with positive linear programs and a few other special cases) for which we now have distributed algorithms.  This class is particularly interesting since many state-of-the-art approximation algorithms for distance-bounded network design problems work by rounding one of these convex relaxations.  So if we can solve the relaxation in a distributed fashion then it is often straightforward to give a distributed approximation algorithm which simply solves the relaxation using our new distributed algorithm and then does the appropriate rounding.  Using this framework, we provide distributed approximation algorithms for a variety of problems (e.g., for \DIRECTED, \DEGREE, \BASIC[3], and \BASIC[4]) which use only polynomial-time computations at each node, achieve approximation ratios asymptotically equal to the centralized algorithm, and have very low round complexity.  Previous approaches to these problems either use exponential-time computations at local nodes, or if they use only polynomial-time computations require either significantly larger round complexity or give asymptotically worse approximations.  
\fi

\bibliography{DistributedSpanners}

\iflipics
 \newpage
\appendix

\section{Proofs from Section~\ref{decomposition-sec}} \label{app:padded}

\subsection{Proof of Lemma~\ref{ball-preserve}}
The first property in Definition \ref{def:padded_decomposition} is directly implied by the definition of $r_v$ for all nodes $v \in V$. 

 For the second property we consider an arbitrary node $u \in V$, and compute the probability that the ball $B(u,k)$ is not in any of the clusters in $P$. Consider an arbitrary value $1 \leq t \leq n$, let $v \in V$ be the node such that $t=\pi(v)$, and let $z=z_v$ be the real number sampled by $v$. 
Also, for any $x,y \in V$, let ${\tilde{d}(x,y)= \min(d(x,y), r \ln n +k)}$. Let us also order the clusters based on their center's position in the permutation, so that $C_t$ is the cluster corresponding to $t=\pi(v)$ (i.e.~$v$ is the cluster center of $C_t$). We define $X_t$ to be the event that if $B(u,k)$ is not in the first $t-1$ clusters, then it is also not in any of the remaining clusters. We provide a recursive bound on $X_t$ based on $X_{t+1}$. Then we will get the second property once we show $\Pr(X_0) \leq \epsilon$.
 We need to define the following events:
\begin{itemize}
\item $A_t: B(u,k)$ does not intersect with any of the clusters $C_1,..,C_{t-1}$.
\item $M^{cut}_t: (\tilde{d}(v,u)-k \leq z < \tilde{d}(v,u)+k \mid A_t)$.
\item $M^{ex}_t: (z<\tilde{d}(v,u)-k \mid A_t)$.
\item $X_t: (\nexists j \geq t: B(u,k) \subseteq C_j \mid A_t)$.
\end{itemize}
In other words, conditional on the event that $B(u,k)$ is not in any of the first $t-1$ clusters, either $B(u,k) \subseteq C_t$, or else one the following two events will occur: $M^{cut}_t$ is the event that $B(u,k)$ partially intersects $C_t$, and $M^{ex}$ is the event that $B(u,k)$ does not intersect $C_t$.
Now the event $X_t$ occurs only when either $M^{cut}_t$ occurs or both $M^{ex}_t$ and $X_{t+1}$ occur (i.e.~when $B(u,k)$ is not in $C_t$ or any of the next clusters). Hence we can write $\Pr(X_t) \leq \Pr(M^{cut}_t) + \Pr(M^{ex}_t)\Pr(X_{t+1})$.
Recall that $z$ is independently sampled from the density function $p(z_v)= \left(\frac{n}{n-1}\right)\frac{e^{-{z_v}/r}}{r}$, and thus $M^{cut}$ can be written as follows:
\begin{align*}
\Pr(M^{cut}_t) &= \int_{\tilde{d}(v,u)-k}^{\tilde{d}(v,u)+k} p(z)d_z
			   = \left(\frac{n}{n-1}\right)\left(1-e^{-2k/r}\right)e^{-(\tilde{d}(v,u)-k)/r}
			   \leq \left(\frac{n}{n-1}\right)\frac{2k}{r} e^{-(\tilde{d}(v,u)-k)/r}.
\end{align*}
Similarly, we can write,
\begin{align*}
\Pr(M^{ex}_t) &= \int_{0}^{\tilde{d}(v,u)-k} p(z)d_z
			   = \left(\frac{n}{n-1}\right)\left(1-e^{-(\tilde{d}(v,u)-k)/r}\right).
\end{align*}
We now inductively prove that $\Pr(X_t) \leq (2- \frac{t}{n-1})(\frac{2k}{r})$. If $t < n$ is the last step, then $\Pr(X_t)=0$, and thus this bound clearly holds. Assume that the bound is true for $X_{t+1}$, we show that then it also holds for $X_t$. We have,
\begin{align*}
\Pr(X_t) &\leq \Pr(M^{cut}_t) + \Pr(M^{ex}_t)\Pr(X_{t+1}) \leq \left(\frac{n}{n-1}\right)\left(\frac{2k}{r}\right)\left(1+ \frac{n-t-2}{n-1}\left(1-e^{-(\tilde{d}(v,u)-k)/r}\right)\right).
\end{align*}
Since $e^{-(\tilde{d}(v,u)-k)/r} \geq e^{-(\ln n)} \geq 1/n$, we get that
$\Pr(X_t) \leq \left(2- \frac{t}{n-1}\right)\left(\frac{2k}{r}\right)$. The second property is then implied by the fact that $\Pr(X_0) \leq \frac{2k}{r}= \frac{2k}{2k(1/\epsilon)} = \epsilon$.

\section{Proofs from Section~\ref{sec:solving-convex}} \label{app:solving-convex}

\subsection{Proof of Lemma~\ref{lem:max-deg-partition}}
Let $\sigma = \{\sigma_1,...,\sigma_\ell\}$ be a partition of nodes in $V$. For all $1 \leq i \leq \ell$, we have ${g(x^{\sigma_i})=\max_{v \in \sigma_i}(\sum_{u:(v,u) \in E} x^{\sigma_i}_{(v,u)})}$. Then we can set $h_\sigma (y) =\max_{i \in [\ell]} (y_i), y \in \mathbb{R}^\ell$, where $y_i$ is the $i$-th coordinate of $y$. Let $\sigma(v) \in \sigma$ be the cluster that node $v$ belongs to. For all $x=(x_{(u,v)})_{(u,v) \in E}$, where  $x_{(u,v)}=0$ for any $(u,v) \in E$ s.t.~$\sigma(u) \neq \sigma(v)$ , we have,
\begin{align*}
g(x)&= \max_{v \in V} \left(\sum_{u:(v,u) \in E} x_{(v,u)} \right)= \max_{\sigma_i \in \sigma} \left( \max_{v \in \sigma_i} \left(\sum_{u:(v,u) \in E} x^{\sigma_i}_{(v,u)} \right)\right)\\
&=  \max_{\sigma_i \in \sigma} \left( g\left(x^{\sigma_i}\right) \right)= h_\sigma(g(x^{\sigma_1}), g(x^{\sigma_2}),..., g(x^{\sigma_\ell})). 
\end{align*}
  It is also easy to see that the function $h_\sigma$ is convex and non-decreasing. Hence $h_\sigma$ satisfies the conditions in Definition \ref{def_partition_func}.
  
\subsection{Proof of Lemma~\ref{lem:cluster_centers}}
The first property of $(k, \epsilon)$-padded decompositions implies that for all nodes $u \in C$, we have $d(u,v)=O((k/\epsilon) \log n)$, where $v$ is the center of cluster $C$. Each node $u \in C$ that determines $v$ as the center of the cluster it belongs to, will send the information of its incident edges to $v$. Since there is no bound on the size of the messages being forwarded, this can be done in $O((k/\epsilon) \log n)$ time.

\section{Proofs from Section~\ref{sec:applications}} \label{app:applications}

\subsection{Proof of Lemma~\ref{lem:directed-round}}
  Let $N_{s,t}$ be the subgraph of $G$ induced by the nodes on paths in $\mathcal P_{s,t}$. Edge $e \in E$ is called a \emph{thick} edge if $|N_{s,t}| \geq n^{1/2}$, and otherwise it is called a \emph{thin} edge. The set $E'$ in Algorithm \ref{alg:kspanner} satisfies the spanner property for all thin edges (as argued in \cite{Berman2011}), and the random tree sampling phase satisfies the spanner property for the thick edges. Each thick edge $(s,t)$ is spanned if at least one node in $N_{s,t}$ performs the random tree sampling. This probability is at least $1-(1- \frac{3\ln n}{n^{1/2}})^{n^{1/2}} \geq 1 - 1/{n^3}$. Then a union bound on all the edges (of size at most $O(n^2)$) implies that w.h.p.~all thick edges are spanned.
We now argue that the output is an $O(n^{1/2} \log n)$-approximation algorithm: at most $O(n^{1/2} \log n)$ arborescences are chosen with high probability (each arborscence has $O(n)$ edges), and we argued that $|E'|= O(n^{1/2} \log n \cdot LP^*)$. Hence, the overall size of the output is $O(n^{1/2} \log n \cdot (n+ LP^*))$.

\subsection{Proof of Lemma~\ref{lem:distributed_rounding}}
Each node $v$ in $G$ has received the fractional solutions $x_e$ corresponding to all edges $e \in E$ incident to $v$. The randomized rounding step can be performed locally: the node with the smaller ID flips a coin, and exchanges the coin flip result with its corresponding neighbors. In order to form $T_i^{in}$ and $T_i^{out}$, $v$ performs a distributed BFS algorithms by forming a shortest path tree while keeping track of the distance from $v$. When the distance counter reaches $k$, the tree construction terminates.

\fi

\end{document}